\documentclass[letterpaper,twocolumn,10pt]{article}
\usepackage{style-files/usenix2019_v3}

\usepackage[numbers,sort&compress,square]{natbib}
\usepackage[normalem]{ulem} 

\usepackage{amsmath,amsfonts,amsthm}
\usepackage{algorithm}
\usepackage[noend]{algpseudocode}
\usepackage[shortlabels]{enumitem}
\usepackage{graphicx}
\usepackage{textcomp}
\usepackage{xcolor}
\usepackage{url}
\usepackage{xspace}
\usepackage{cleveref}
\usepackage{pifont}
\usepackage{booktabs}
\usepackage{wrapfig}
\usepackage[font=footnotesize]{caption}
\usepackage{thmtools}
\usepackage{thm-restate}
\usepackage{tabularx}
\usepackage{enumitem}
\usepackage{caption}
\usepackage{subcaption}
\usepackage{soul}
\usepackage{nopageno}
\usepackage{mathtools}

\usepackage{setspace}

  \setlist[itemize]{leftmargin=*}
  \setlist[enumerate, 1]{1.} 
  \setlist{itemsep=0pt,parsep=2pt}             
  \usepackage[subtle,wordspacing=normal]{savetrees} 
  \setlength\abovedisplayskip{2pt plus 2pt minus 2pt}
   \setlength\belowdisplayskip{2pt plus 2pt minus 2pt}



\usepackage{todonotes}

\def\cameraReady{} 

\begin{document}


\ifdefined\cameraReady
\fi

\newcommand{\textcomment}[2]{\textbf{#1:} #2}
\newcommand\rati[1]{\todo[color=yellow,inline]{\textcomment{Rati}{#1}}}
\newcommand\sasha[1]{\todo[color=red,inline]{\textcomment{Sasha}{#1}}}
\newcommand\zekun[1]{\todo[color=brown,inline]{\textcomment{Zekun}{#1}}}
\newcommand\balaji[1]{\todo[color=purple,inline]{\textcomment{Balaji}{#1}}}
\newcommand\fs[1]{\todo[color=cyan,inline]{\textcomment{Florian}{#1}}}
\newcommand{\changebars}[2]{{[check]}[{\em \color{red}#1}] [{\sout {\em #2}]}}
\newcommand{\sd}[1]{\textcolor{blue}{{\bf Sourav:} #1}}

\newcommand\com[1]{}

\newcommand\sysname{Shoal++\xspace}
\newcommand\protocolname{Mako\xspace}

\newcommand\chainspace{Chainspace\xspace}
\newcommand\ethereum{Ethereum\xspace}
\newcommand\hyperledger{Hyperledger\xspace}
\newcommand\omniledger{Omniledger\xspace}
\newcommand\rapidchain{RapidChain\xspace}
\newcommand\coconut{Coconut\xspace}
\newcommand\bft{BFT\xspace}
\newcommand\rscoin{RSCoin\xspace}
\newcommand\bftsmart{\textsc{bft-SMaRt}\xspace}
\newcommand\byzcuit{Byzcuit\xspace}
\newcommand\bitcoin{Bitcoin\xspace}
\newcommand\bitcoinng{Bitcoin-NG\xspace}
\newcommand\cosi{CoSi\xspace}
\newcommand\byzcoin{ByzCoin\xspace}
\newcommand\elastico{Elastico\xspace}
\newcommand\algorand{Algorand\xspace}
\newcommand\hyperledgerfabric{Hyperledger Fabric\xspace}
\newcommand\pbft{PBFT\xspace}
\newcommand\bftlong{Byzantine Fault-Tolerant\xspace} 
\newcommand\solidus{Solidus\xspace}
\newcommand\hashgraph{Hashgraph\xspace}
\newcommand\avalanche{Avalanche\xspace}
\newcommand\blockmania{Blockmania\xspace}
\newcommand\stellar{Stellar\xspace}
\newcommand\libra{Libra\xspace}
\newcommand\librabft{LibraBFT\xspace}
\newcommand\move{Move\xspace}
\newcommand\hotstuff{HotStuff\xspace}
\newcommand\vanillahs{Vanilla-HotStuff\xspace}
\newcommand\batchedhs{Batched-HotStuff\xspace}

\newcommand{\keyword}[1]{\normalfont \texttt{#1}}
\newcommand{\accounts}{\keyword{accounts}}

\newcommand{\transfer}{O}
\newcommand{\cert}{C}
\newcommand{\sync}{S}
\newcommand{\account}{a}
\newcommand{\authority}{\alpha}

\renewcommand{\paragraph}[1]{\vspace{2mm}\noindent\textbf{#1}.\xspace}

\newcommand{\cf}{cf.\@\xspace}
\newcommand{\vs}{vs.\@\xspace}
\newcommand{\etc}{etc.\@\xspace}
\newcommand{\ala}{ala\@\xspace}
\newcommand{\wrt}{w.r.t.\@\xspace}
\newcommand{\etal}{\textit{et al.}\@\xspace}
\newcommand{\eg}{\textit{e.g.}\@\xspace}
\newcommand{\ie}{\textit{i.e.}\@\xspace}
\newcommand{\via}{\textit{via}\@\xspace}
\newcommand{\defacto}{\textit{de facto}\@\xspace}

\newtheorem{assumption}{Security Assumption}
\newtheorem{property}{Property}
\newtheorem{theorem}{Theorem}
\newtheorem{lemma}{Lemma}

\newcommand\inlinesection[1]{{\bf #1.}}

\def\first{({i})\xspace}
\def\second{({ii})\xspace}
\def\third{({iii})\xspace}
\def\fourth{({iv})\xspace}
\def\fifth{({v})\xspace}
\def\sixth{({vi})\xspace}

\newcommand{\one}{({i})\xspace}
\newcommand{\two}{({ii})\xspace}
\newcommand{\three}{({iii})\xspace}
\newcommand{\four}{({iv})\xspace}
\newcommand{\five}{({v})\xspace}
\newcommand{\six}{({vi})\xspace}

\definecolor{verylightgray}{gray}{0.9}

\newcommand\vgap{\vskip 2ex}
\newcommand\marker{\vgap\ding{118}\xspace}
\def\na{--}
\def\unsure{?}
\def\missing{$!$}
\newcommand{\yes}{\ding{51}}
\newcommand{\no}{\ding{55}}
\DeclareRobustCommand\pie[1]{
\tikz[every node/.style={inner sep=0,outer sep=0, scale=1.5}]{
\node[minimum size=1.5ex] at (0,-1.5ex) {}; 
\draw[fill=white] (0,-1.5ex) circle (0.75ex); \draw[fill=black] (0.75ex,-1.5ex) arc (0:#1:0.75ex); 
}}
\def\L{\pie{0}} 
\def\M{\pie{-180}} 
\def\H{\pie{360}} 


\algdef{SE}[Upon]{Upon}{EndUpon}[1]{\textbf{upon}\ #1\ \algorithmicdo}{\algorithmicend\ \textbf{}}%
\algtext*{EndUpon}

\algdef{SE}[Receiving]{Receiving}{EndReceiving}[1]{\textbf{upon
receiving}\ #1\ \algorithmicdo}{\algorithmicend\ \textbf{}}%
\algtext*{EndReceiving}
\newcommand\StateX{\Statex\hspace{\algorithmicindent}}
\algrenewcommand\textproc{}

\graphicspath{{figures/}}

\date{}

\title{\sysname: High Throughput DAG BFT Can Be Fast and Robust!}

\author{
{\rm Balaji Arun}\\
Aptos Labs
\and
{\rm Zekun Li}\\
Aptos Labs
\and
{\rm Florian Suri-Payer\textsuperscript{*}}\\  
Cornell University
\and
{\rm Sourav Das\textsuperscript{*}}\\  
UIUC
\and
{\rm Alexander Spiegelman}\\
Aptos Labs
}


\maketitle

\begingroup
\renewcommand{\thefootnote}{*}
\footnotetext{The work was done while the authors were interns at Aptos Labs.}
\endgroup

\begin{abstract}
Today's practical partially synchronous Byzantine Fault Tolerant consensus protocols trade off low latency and high throughput. On the one end, traditional BFT protocols such as PBFT and its derivatives optimize for latency. 
They require, in fault-free executions, only 3 message delays
to commit, the optimum for BFT consensus. However, this class of protocols typically relies on a single leader, hampering throughput scalability.
On the other end, a new class of so-called DAG-BFT protocols demonstrates how to achieve highly scalable throughput by separating data dissemination from consensus, and using every replica as proposer. Unfortunately, existing DAG-BFT protocols pay a steep latency premium, requiring on average 10.5 message delays to commit transactions.

This work aims to soften this tension, and proposes \sysname, a novel DAG-based BFT consensus system that offers the throughput of DAGs while reducing end-to-end consensus commit latency to an average of 4.5 message delays.
Our empirical findings are encouraging, showing that \sysname achieves throughput comparable to state-of-the-art DAG BFT solutions while reducing latency by up to 60\%, even under less favorable network and failure conditions.


\end{abstract}

\vspace{-1.75mm}
\section{Introduction}
This paper presents \sysname, a novel partially synchronous DAG-based Byzantine Fault Tolerant (BFT) consensus protocol that matches the throughput of contemporary DAG-BFT protocols while significantly reducing latency, narrowing the gap with the theoretical optimum.

BFT consensus offers the attractive abstraction of a single, always available server, that remains correct even as a subset of replicas may fail arbitrarily. Fueled by recent advances to simplicity and practicality~\cite{castro2002practical, hotstuff, narwhaltusk, bullshark}, interest and adoption of BFT protocols is skyrocketing: it is at the core of aspiring multi-national projects such as the digital Euro~\cite{digital-euro}, confidential data sharing frameworks~\cite{ccf}
and the growing Web3 industry. Already today, numerous blockchain companies~\cite{aptos, algorand, ava, sui} are actively deploying BFT protocols~\cite{bullshark, spiegelman2023shoal, chen2019algorand, snowflake, jolteon}, supporting millions of daily worldwide users.  

Following the seminal PBFT~\cite{castro2002practical} protocol, most traditional BFT consensus protocols~\cite{hotstuff, zyzzyva, sbft} operate in a primary-backup like regime in which a single, dedicated leader replica proposes an ordering of commands, and follower replicas vote to ensure consistency and durability. 

As demonstrated by PBFT, this approach can enjoy excellent latency in fault-free executions, requiring only three message delays (md) to order a leader's proposal, the optimum achievable for BFT protocols.
A dedicated leader, however, poses not only a single point of failure, but constrains the achievable throughput to the processing and networking bandwidth available to a single server.
To improve scalability, and meet the high throughput demands of modern applications, recent proposals emphasize maximizing resource utilization and balancing load across \textit{all} replicas. 

Initial approaches propose multi-log frameworks~\cite{MirBFT} that partition the request space, and operate multiple consensus instances in parallel (each led by a different replica) whose logs are carefully intertwined. More recent designs take scalability a notch further by entirely separating data dissemination and consensus. 
Pioneered by Aleph~\cite{aleph} and DAG-Rider~\cite{allyouneed}, and made practical by Narwahl~\cite{narwhaltusk} a new class of so-called DAG-BFT protocols have risen to popularity. 

DAG-BFT protocols employ a structured, highly parallel, asynchronously responsive data dissemination layer as a backbone, and layer consensus atop. Data dissemination proceeds through a series of \textit{rounds} in which each replica proposes new transaction batches via reliable broadcast (RB); each proposal (DAG \textit{node}) references a subset of preceding proposals (DAG \textit{edges}), resulting in a Directed Acyclic Graph (DAG) of temporally (or \textit{causally}) ordered proposals. The consensus layer reaches agreement on prefixes of the DAG, and derives an ordering for all causally referenced proposals.

State-of-the-art DAG protocols~\cite{bullshark, spiegelman2023shoal} avoid running an explicit consensus protocol, and instead embed the consensus process onto the DAG directly: they ensure that all replicas converge on the same view of the DAG, and interpret it's structure locally to implicitly determine commitment. 
Consensus simulates leaders by assigning pre-determined "fixpoint" nodes in the DAG (henceforth \textit{anchors}) and confirms agreement of an anchor by waiting until it has gathered sufficient references in future DAG rounds that guarantee its durability. 

Upon committing an anchor node, all of its causally referenced proposals are implicitly committed and ordered.
Such design is appealingly simple -- it requires no additional communication for consensus and avoids the notoriously complex view change logic inherent to traditional BFT protocols. It is also very resource efficient -- each replica acts as a proposer, thus allowing high throughput with a large number of replicas. As a result, DAG-BFT has prompted swift adoption by several large blockchain systems~\cite{sui, aptos}. 

However, today's DAG protocols fall short in terms of latency. They require several rounds, each consisting of reliable broadcasts to \emph{certify} DAG nodes, to commit a transaction. Bullshark~\cite{bullshark} for instance requires, even in fault-free cases, up to 4 rounds to commit a proposal (12md in expectation, \S\ref{sec:dissect}); recent advances made by Shoal~\cite{spiegelman2023shoal} reduce it to 3 rounds (10.5md expected, \S\ref{sec:dissect}). A far cry from the 3md possible by traditional BFT protocols! 


This work proposes \sysname: a novel DAG-based BFT system that achieves near-optimal latency  while maintaining the high throughput and robustness of the state-of-the-art certified DAG construction.

We begin by analyzing sources of end-to-end (e2e) latency 
in DAG protocols~(\S\ref{sec:dissect}); at a high level, it can be broken down into three stages. 
\one DAG proposals happen only at regular intervals (rounds), and transactions that narrowly miss inclusion (e.g. arrive just after a node is formed) must wait for the next round. We refer to the time a transaction must wait to be included in a DAG proposal as \textbf{Queuing Latency}.
\two In order to be committed, a proposed DAG node must be referenced by a committed anchor. Unfortunately, in existing DAG protocols~\cite{bullshark, spiegelman2023shoal}, at most one anchor is scheduled per round\footnote{Bullshark~\cite{bullshark} schedules an anchor every second round; Shoal~\cite{spiegelman2023shoal} improves it to every round}. This results in additional \textbf{Anchoring latency}, the time until a node is referenced by the next committed anchor.
\three Finally, in existing DAG protocols, an anchor requires \textit{at least} two more DAG rounds to be committed; we refer to this as \textbf{Anchor Commit Latency.} 

\paragraph{Key ideas} To reduce end-to-end consensus latency, \sysname employs three techniques, respectively addressing each latency stage.
First, \sysname identifies and optimizes an inefficiency in the existing Bullshark commit rule to reduce common case Anchor Commit Latency to only 4md. 
Next, we configure \sysname to dynamically attempt to make \textit{every} node an anchor, thus eliminating Anchoring Latency. \sysname builds on Shoal's idea to dynamically re-interpret the anchor assignment at each commit, and introduces small timeouts to help replicas advance rounds in lock-step to ensure that future rounds sufficiently reference anchor candidates.
Finally, to minimize Queuing latencies, \sysname orchestrates \textit{multiple} staggered DAG instantiations in parallel, and intertwines their resulting logs; this allows transactions that miss inclusion to simply board the new round of the next, staggered DAG.

Put together, these techniques allow \sysname to reduce expected end-to-end consenus latency to only 4.5md, shaving off 6md compared to Shoal, the state of the art DAG-BFT. 


We've implemented a prototype of \sysname and evaluated it empirically against two popular DAG-BFT protocols (Bullshark~\cite{bullshark} and Shoal~\cite{spiegelman2023shoal}), a concurrent DAG-BFT protocol (Mysticeti~\cite{mysticeti}), and a traditional BFT protocols (Jolteon~\cite{jolteon}). Our results are promising, demonstrating matching throughput of existing DAG BFT solutions while reducing latency by up to 60\% over Bullshark in the common case and by up to 10x over Mysticeti in the failure case.
\section{Model}
We adopt standard BFT assumptions~\cite{castro2002practical, hotstuff}. We refer to participants that follow the protocol as correct, and those that deviate as faulty (or \textit{Byzantine}).
There are a total of $n = 3f+1$ replicas, of which at most $f$ may be faulty at any time. We make no assumptions on the number of faulty clients.
We assume the presence of a strong, yet static adversary that may arbitrarily delay messages and coordinate all faulty particpiants, but may not break standard cryptographic primitives such as signatures. We denote signed messages using $\langle m \rangle_\sigma$. We assume that all signatures and Quorums are validated, and henceforth omit explicit mention.

The goal of BFT consensus is to establish a common totally ordered log across all correct replicas. We operate under the partial synchrony model~\cite{dwork1988consensus}. The system is safe at all times -- no two correct replicas ever disagree on a committed prefix of the log. Liveness is not guaranteed during periods of asynchrony, but eventually there arrives a Global Stabilization Time (GST) after which the network (temporarily) behaves synchronously, and progress is guaranteed.

\vspace{-2mm}
\section{Dissecting DAG-BFT}\label{sec:background}
Before we present \sysname, we provide a brief background on the core mechanisms of DAG-BFT, and analyze the latency breakdown of existing protocols. We implicitly focus on \textit{ certified} DAG protocols derived from Narwahl\cite{narwhaltusk}. Although some recent works propose uncertified DAGs in an effort to save latency, this comes at the cost of reduced robustness and introduces new sources of unwanted latency (\S\ref{sec:dissect}).



\vspace{-2mm}
\subsection{DAG core}
At the core of DAG-BFT is the round-based certified DAG structure proposed in Narwahl~\cite{narwhaltusk}. Replicas proceed through a series of structured rounds, in which replicas continuously propose new nodes containing batches of transactions. 
Each replica maintains their own local view of a DAG.

A replica adds another replica's proposal as node in its local DAG once the proposal is \textit{certified}. Proposals in round $r$ must reference $n-f$ certified proposals from round $r-1$ to be considered \textit{valid} candidates, and each replica may propose and certify at most one candidate per round. 
To certify a proposal for round a replica follows a simple Reliable Broadcast procedure. 
\begin{enumerate}
    \item It broadcasts a signed proposal $P \coloneqq \langle r, B, edges \rangle_\sigma$ for round $r$ containing \one a batch $B$ of transactions, and \two $n-f$ unique proposal certificates from round $r-1$ (\textit{edges} for short). 
    \item Upon receiving a valid proposal, a replica checks whether it has already received a different proposal from the same author for round $r$. If it has, it ignores the proposal; if it has not, it casts a signed vote $S \coloneqq \langle r, d=hash(P) \rangle_\sigma$ 
    back to the proposing replica.
    \item A replica waits for $n-f$ matching signatures for its proposal $P$ and aggregates them into a certificate $C \coloneqq \langle r, d, \{S \} \rangle $ that it broadcasts to all replicas.
    \item Upon receiving a proposal certificate a replica adds a node to its local DAG.
\end{enumerate}


    


Figure~\ref{fig:narwhaldag} illustrates a DAG of certified nodes as it appears in a local view of one of the validators. 
We note that because proposals are certified, no replicas can produce two conflicting nodes for the same (round, replica) position. Thus, all replicas eventually converge on the \textbf{same} local view of the DAG. 

 
\begin{figure}[h]
    \centering
    \includegraphics[width=0.35\textwidth]{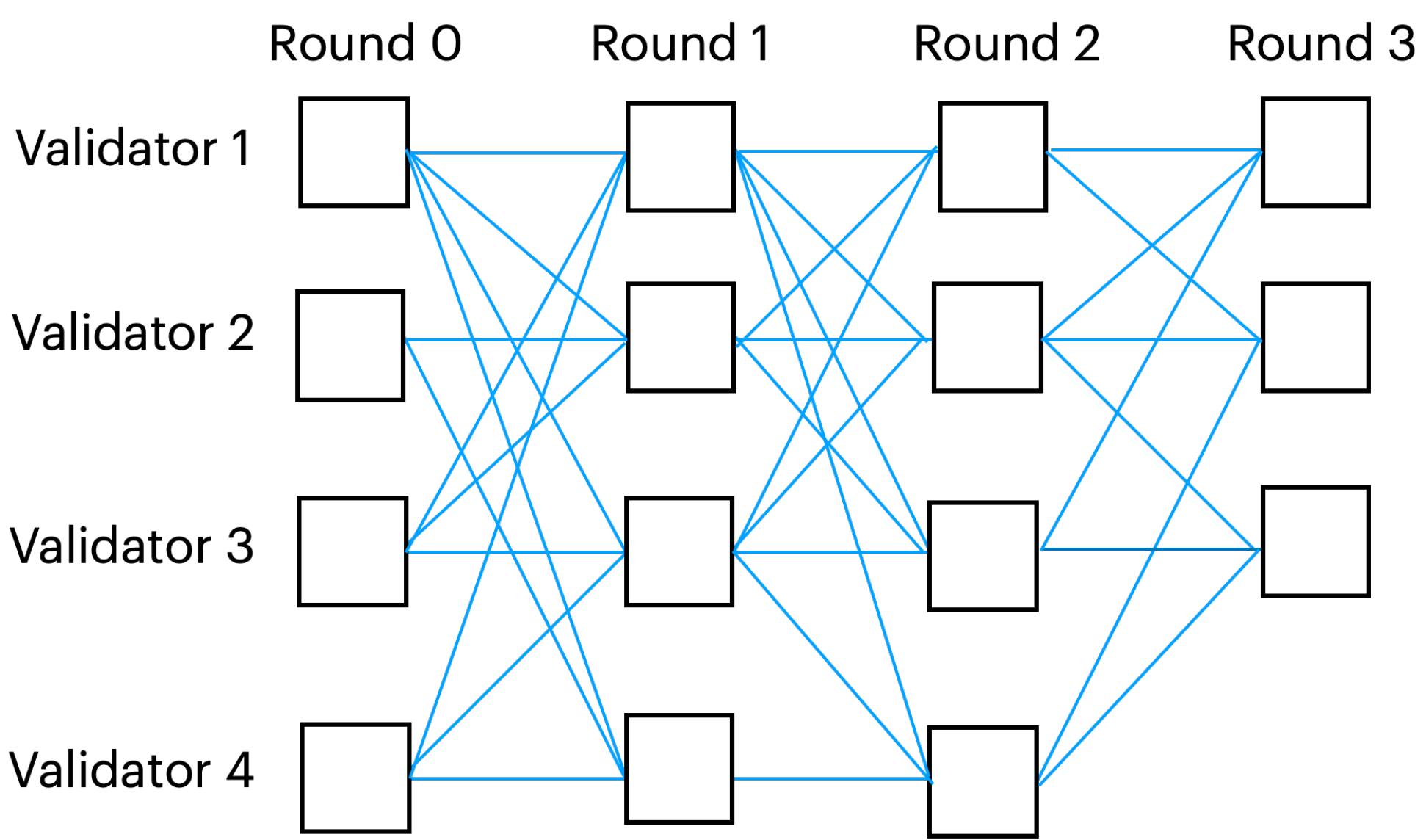}
    \caption{Narwhal's round-based DAG from a validator's local point of view.}
    
    \label{fig:narwhaldag}
    \vspace{-8pt}
\end{figure}

\subsubsection{Consensus core.}\label{sec:embedded}  
By itself, the DAG is no more than a scalable and robust mempool. It continues to grow at the pace of the network (i.e. \textit{responsively}), and, by virtue of being certified, ensures that all disseminated transaction batches are reliably available.
A consensus layer is needed to establish agreement. It can be modularly layered by running an \textbf{external} consensus blackbox that commits views of the DAG (e.g., Narwhal+Hotstuff~\cite{narwhaltusk}), or logically \textbf{embedded} into the existing DAG messages without the need for additional messages (e.g., Tusk~\cite{narwhaltusk}, Bullshark~\cite{bullshark}, or Shoal~\cite{spiegelman2023shoal}). The rest of the paper focuses on embedded DAG-BFT protocols as they are more efficient.

\paragraph{Embedded consensus}
At a high level, consensus is projected onto the DAG structure by designating specific DAG nodes as fixpoints (henceforth \textit{anchors}) that simulate a leader, and interpreting DAG edges as votes for a leader. 
Upon committing an anchor, all of its causal history can be implicitly committed as well; an ordering can be derived using any deterministic function, e.g. using a topological sort.

Anchor nodes are chosen deterministically in advance (much like leaders in traditional BFT protocols) and placed at regular intervals. Bullshark~\cite{bullshark}, for instance, uses a round-robin scheme to select anchor candidates and places an anchor candidate per every other round of the DAG. Figure~\ref{fig:bullshark} illustrates the anchor commit logic in Bullshark.

\begin{figure}[h]
    \centering
    \includegraphics[width=0.4\textwidth]{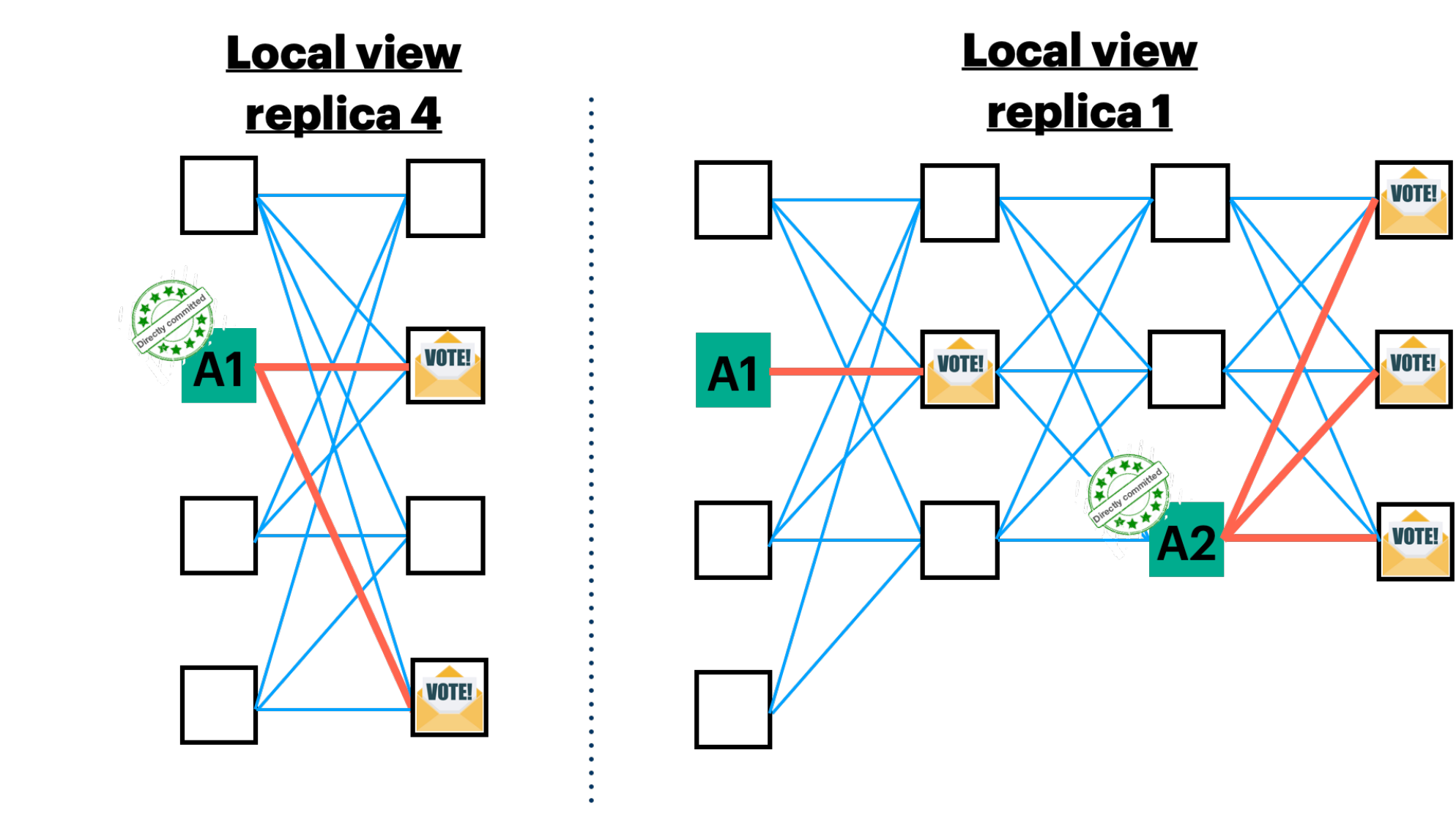}
    \caption{Bullshark Commit example. Replica 4 observes $f+1 = 2$ votes for anchor A1, and directly commits it. Replica 1 observes only one vote ($< f+1$), and cannot directly commit A1; it commits A1 indirectly upon committing A2 and its causal history, ordering A1 before A2.}
    \label{fig:bullshark}
    \vspace{-12pt}
\end{figure}
An anchor in round $r$ becomes committed as soon as $f+1$ nodes in a round $r' > r$ link to it (\textbf{Direct Commit Rule}). The safety intuition is straightforward: \one all nodes are certified, and thus every anchor is equivocation-free, and \two all nodes in rounds $r'' > r'$ must include an edge to at least one of the nodes that link to the anchor, and thus the anchor is durable.

In particular, every future committed anchor must link to all previously committed anchors, thereby establishing a consistent total order amongst anchors. Upon committing an anchor, a replica recursively traverses the anchor's causal history, and first commits all preceding anchors (\textbf{Indirect Commit Rule}). Conversely, if an anchor candidate is \textit{not} in a committed anchors causal history, then it could never have committed, and it is safe to skip it.
Intuitively, the Indirect Commit Rule ensures that, even if a replica does not invoke the Direct Commit rule on some anchors (e.g. because its local DAG view is temporarily out of sync), all correct replicas will agree on the same sequence of committed anchors.

We note that while this design is always safe, it relies on conservative timeouts to ensure liveness (similarly to traditional BFT protocols).
Since nodes include only $2f+1$ edges, anchor candidates are not guaranteed to receive the $f+1$ links required to commit. In the worst-case, candidates are repeatedly skipped, thus stalling commit progress.

To reduce the expected latency, Shoal~\cite{spiegelman2023shoal} optimizes Bullshark in two ways. First, it leverages the deterministic anchor commit sequence to dynamically re-interpret the anchor schedule upon every commit. Rather than placing anchors every other round, this approach allows for an anchor to be placed every round.
Second, it introduces an anchor reputation scheme to attempt to select as candidates only the fastest and well-connected replicas; such candidates are more likely to be included as edges. This allows Shoal to avoid timeouts in all but extremely unlikely scenarios.

\vspace{-2mm}
\subsection{Latency breakdown}\label{sec:dissect}
In the following, we breakdown different sources of latencies present in DAG-protocols, and analyze their respective costs in Bullshark and Shoal.
%
We denote as the e2e consensus latency of a transaction the time between a replica first receives the transaction, and the time that the transaction is ordered. We separate the e2e latency of certified DAG-BFT protocols into three components:
\begin{enumerate}
    \item \textbf{Queuing latency:} the time it takes for a transaction to be proposed in a DAG node.
    \item \textbf{Anchoring latency:} the time it takes for a non-anchor node's proposal to be "picked up" by an eventually committing anchor.
    \item \textbf{Anchor Commit latency:} the time it takes to commit an anchor proposal.
\end{enumerate}



\paragraph{Queuing Latency} 
The DAG advances in a series of synchronized rounds, and each replica may issue at most one proposal (containing a batch of transactions) per round. Each round requires a Reliable Broadcast to certify a proposal, for a total of 3md. Consequently, a transaction that narrowly misses inclusion in the current round proposal must, in the worst-case wait up to 3md before being broadcast; assuming transactions arrive at a uniform rate, the average Queuing Latency is 1.5md.

\paragraph{Anchoring latency}
A proposal becomes "anchored" once it is referenced by an eventually committing anchor. Anchoring latency depends on two factors: \one the frequency of anchor candidates, and \two the reliability with which said anchor candidates are committed (i.e. not skipped).\footnote{The proposals of non-anchor candidates are subject to both factors; the proposals of anchors candidates themselves, trivially, are subject only to the latter.} 
Bullshark has anchor candidates in every odd numbered round (i.e., only every second round). Consequently, the best-case inclusion latency of non-anchor proposals in even rounds is 1 round (3md), and 2 rounds (6md) in odd rounds; a best-case average of 4.5md. Shoal increases candidate frequency to every round, and thus improves the inclusion latency of non-anchor proposals to one round for all  (3md).
The practical reliability of commitment cannot be easily quantified by pen-and-paper, and is best left to empiric analysis as it depends on network configuration and message delivery patterns. Shoal demonstrates that the use of a reputation can aid significantly in ensuring that anchor candidates are reliably included~\cite{spiegelman2023shoal}.


\paragraph{Anchor Commit latency} In both Bullshark and Shoal, an anchor becomes committed upon either \one being references by $f+1$ nodes in the next round (Direct Commit Rule), \two or being subsumed in the causal history of future committed anchor (Indirect Commit Rule).
The Direct Commit Rule requires at least 2 rounds: one to certify the anchor proposal, and one to certify $f+1$ proposals that reference it, for a total of at least 6md. The latency of indirectly committed anchors can be broken down into their own Anchoring Latency to a future anchor, plus the future anchors Commit Latency.

Putting all latency stages together results in a best case average latency of 12md (1.5md + 4.5md + 6md) for Bullshark, and 10.5 for Shoal (1.5md + 3md + 6md). 

\subsection{A note on uncertified DAGs} A class of recent DAG-based designs~\cite{keidar2022cordial, mysticeti, bbca} proposes to replace DAG nodes with uncertified proposals that are disseminated with best-effort. At first glance, such designs holds promise by reducing latencies across the stack. However, removing certificates makes the DAG brittle: it can introduce a significant amount of unwanted synchronization on the critical path and increase susceptibility to timeout violations~\cite{giridharan2024autobahn}. Edges to uncertified, yet locally unavailable proposals must be validated by fetching missing data (at least 2 additional md per round) \textit{before} advancing to avoid losing liveness in the face of Byzantine fabrications or unstable networks. In practice, we find that even occasional message drops can significantly degrade overall latency (\S\ref{sec:failures}). In contrast, \sysname demonstrates that certification is \textit{not} the root-cause of high latencies, and that latency can be significantly reduced without sacrificing robustness.

\section{\sysname Overview}


\sysname strives to reduce latency across each of the three latency stages; it employs three different ideas that respectively address Queuing, Anchoring, and Anchor Commit Latency. Together, \sysname is able to reduce total e2e expected latency close to an average of 4.5md. We briefly outline each idea in sequence.

\paragraph{Faster anchors} The first observation \sysname makes is that the Bullshark Direct Commit rule can be optimized when anchors are referenced by a supermajority of uncertified nodes.
Intuitively, a replica may eagerly commit an anchor as soon as $2f+1$ \textit{proposals} that link to an anchor node are observed, as this implicitly guarantees that eventually the Direct Commit Rule is satisfied. Since this is an even stricter condition that may not be fulfilled, \sysname \one retains the existing Direct Commit Rule as backup, and \two leverages Shoal's reputation mechanism to reliably trigger the Fast Direct Commit Rule. This allows \sysname, in practice, to reduce Anchor Commit latency to 4md.



\paragraph{More anchors}
Next, \sysname takes Shoal's idea of increased anchor frequency one step further, and tries to turn as many nodes into anchors as possible. Doing so requires striking balance. On the one hand, increasing the number of anchors increases the opportunities to avoid incurring the anchoring latency. On the other hand, it introduces also opportunity for more uncommitted anchors, which can stall progress. Adopting the reputation scheme proposed by Shoal can help identify a reliable subset of candidates, but is not enough to streamline progress in practice. \sysname \one slows down round advancement to put the DAG into lockstep and avoid missed anchors, and \two dynamically re-interprets the tentative anchor schedule to skip anchors that are no longer needed.



\paragraph{More DAGs} Finally, \sysname minimizes Queuing latency by operating not one, but multiple DAGs in parallel. DAGs are staggered using a small offset, allowing waiting transactions that miss a round to quickly be included in the \textit{next} DAG. This simultaneously improves throughput via better resource utilization: proposals are sent more frequently, batches are smaller and can be sent and processed more efficiently (in a streaming fashion) than less frequent, large batches.

\section{\sysname In-Depth}
\sysname uses as starting points the DAG-core of Narwahl~\cite{narwhaltusk}, and Consensus-core of Bullshark~\cite{bullshark} outlined in \S\ref{sec:background}. For the sake of brevity, we do not duplicate the pseudocode that is available in previous works and, in particular, omit a detailed description of the DAG construction (i.e., node broadcasts, transaction submission, node fetching, garbage collection, and so on). We defer to the original papers for an in-depth discussion and correctness proofs.

Algorithm~\ref{alg:dagapi} summarizes the DAG API and node structure. We abstract away the node certification process, and the commit logic. \textsc{Process\_node} and \textsc{Process\_certified\_node} respectively involve checking that a node proposal is valid and voting for it, and adding a new confirmed node to the DAG, triggering commit rules. \textsc{Run\_bullshark} performs an instance of the Bullshark commit logic to determine (given a starting point) the first anchor to be ordered (\S\ref{sec:embedded}).

Upon committing an anchor, all of its (yet-to-be-ordered) causal history is deterministically ordered as well.
Helper functions \textsc{causal\_history} and \textsc{get\_anchors} aid, respectively, in identifying the causal history of a node (necessary for both commit and ordering logic), and the anchor candidate schedule. \sysname adopts and extends Shoal's~\cite{spiegelman2023shoal} reputation logic; \textsc{get\_anchors} returns, for each round, a vector of eligible anchors.

\begin{algorithm}[h]
\caption{DAG interface}
\begin{algorithmic}[1]
\footnotesize

\Statex DAG API:        

\StateX \textsc{process\_node}(v: Node)
\StateX \textsc{process\_certified\_node}(v: Node)
\StateX \textsc{causal\_history}(v: Node) $\rightarrow$ Vec<Node>
\StateX \textsc{run\_bullshark}(anchor: Node) $\rightarrow$ Node
\StateX \textsc{get\_anchors}(round: int) -> Vec<Node>

\Statex

\Statex struct $\textit{Node } \mathsf{n}$: 
        \StateX $ \mathsf{n}.\textit{round}$  \Comment{the associated round in the DAG}
        \StateX $ \mathsf{n}.\textit{source}$  \Comment{the replica proposing $n$}
        \StateX $ \mathsf{n}.\textit{parents}$  \Comment{$n-f$ nodes from round
        $ \mathsf{v}.\textit{round}-1$}

\end{algorithmic}
\label{alg:dagapi}
\end{algorithm}

\begin{algorithm}[h]
\caption{Core \sysname utilities}
\begin{algorithmic}[1]
\footnotesize

\Statex \textbf{Local variables:}
      
        \StateX dag: DAG 
        \Comment{Initially empty}
        \StateX weak\_votes: Vec< Vec<int> >
        \Comment{Initially empty}
        \StateX round: int
        \Comment{Initially 0}
        \StateX anchors: Vec<Node>
        \Comment{Initially empty}
\Statex

\Procedure{\textsc{next\_ordered\_nodes}()}{}

\While{true}

\If{anchors.\textsc{is\_empty}()}

    \State round++
    \State anchors $\leftarrow$ dag.\textsc{get\_anchors}(round)

\EndIf

\State anchor $\leftarrow$ anchors.\textsc{pop}()
\label{line:anchor}

\State anchor\_to\_order  $\leftarrow$ \textbf{select}
\label{line:anchortoorder}
\State \hspace{2cm} \textsc{fast\_commit}(anchor)
\State \hspace{2cm} dag.\textsc{run\_bullshark}(anchor)

\If{anchor\_to\_order $\neq$ anchor}
    \State \textsc{skip\_to}(anchor\_to\_order)

\EndIf
    \State \textbf{return} dag.\textsc{causal\_history}(anchor\_to\_order)
    \label{line:return}

\EndWhile

\EndProcedure

\Statex

\Procedure{\textsc{skip\_to}}{anchor: Node}

    \State round $\leftarrow$ anchor.round
    \State anchors $\leftarrow$ dag.\textsc{get\_anchors}(round)
    \State anchors.\textsc{remove}(anchor)

\EndProcedure

\Statex

\Procedure{\textsc{fast\_commit}}{n: Node}
    \While{true}
        \If{weak\_votes[n.round][n.source] $\geq 2f+1$}
            \State \textbf{return} n
        \EndIf

    \EndWhile

\EndProcedure

\Statex

\Receiving{node n}
        \State $r \leftarrow n.round$
        \State $s \leftarrow n.source$
        \If{\text{n is the first node revived from s in round r}} 
            \For{\textbf{each} $parent \in n.parents$}
            \State weak\_votes[r][parent]++
        \EndFor
        \State dag.\textsc{process\_node}(n)
        \EndIf

\EndReceiving

\Statex

\Receiving{certified node cn}       
        \State dag.\textsc{process\_certified\_node}(cn)
        
\EndReceiving
    
\end{algorithmic}
\label{alg:mako}
\end{algorithm}

Algorithm~\ref{alg:mako} summarizes the core protocol additions made by \sysname in an effort to reduce Anchor Commit, and Anchoring latency. We now discuss both in detail in turn.   

\subsection{A faster Direct Commit rule}

As described in \S\ref{sec:dissect} it currently takes, in the best case, 6 message delays to commit an anchor using the Bullshark Direct Commit Rule. One DAG round is required to certify the anchor node itself, and a second DAG round is required to collect sufficient votes. A replica can directly commit an anchor upon observing $f+1$ certified nodes that link to the anchor (i.e. include the anchor in their parent set).

\begin{figure}[h]
    \centering
    \includegraphics[width=0.33\textwidth]{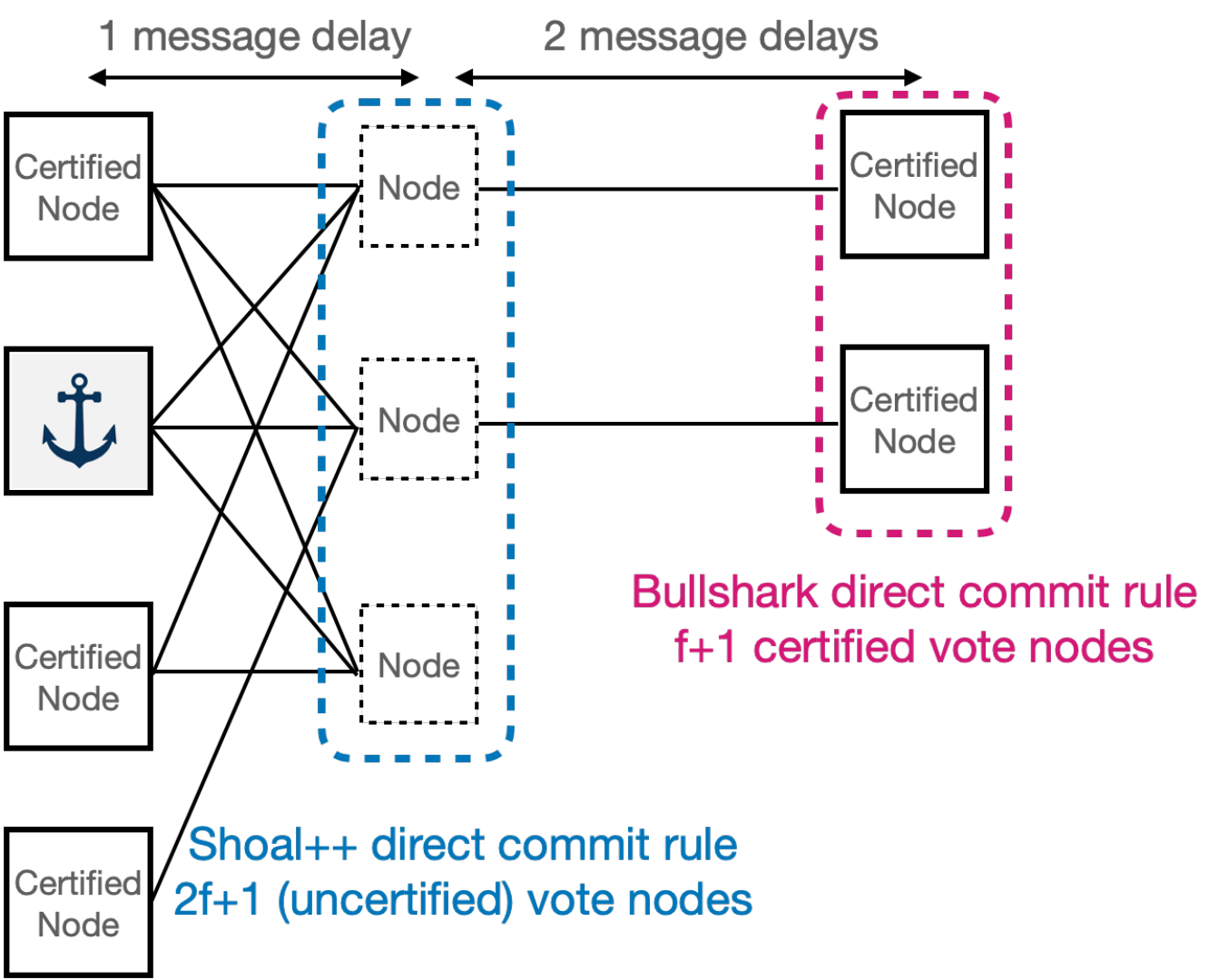}
    \caption{Bullsharks's direct commit rule requires f+1 certified vote nodes, while \sysname's direct commit rule requires $2f+1$ (uncertified) node proposals.}
    \label{fig:commitrule}
\vspace{-15pt}
\end{figure}

\sysname makes the following simple observation: if sufficiently many DAG \textit{proposals} link to the anchor, the anchors fate is already set in stone, and we need not wait for round certification to complete. Recall that the certification process consists of three message exchanges, a proposal step, a voting step, and a certificate forwarding step (\S\ref{sec:background}). 
\sysname allows replicas to \textsc{Fast\_Commit} (Alg. \ref{alg:mako}) as soon as it observes $2f+1$ proposals that link to an anchor (Fig. \ref{fig:commitrule}). To do so, replicas simply additionally keep track of \textit{weak votes} for each round that correspond to the proposals observed.
Weak votes are not guaranteed to survive certification. For instance, a faulty proposer may equivocate, and produce a node certificate containing a different proposal that does not link to the anchor. However, out of any $2f+1$ weak-votes, at least $f+1$ must be cast by correct proposers that will never equivocate. Consequently, the presence of $2f+1$ weak-votes guarantees that eventually $f+1$ certificates linking to the anchor must be formed, thus ensuring safety. 
Using the Direct Fast Commit rule allows replicas to commit an anchor in only $4$md: 3md to certify the anchor, and 1md to broadcast and receive proposals.

We note, however, that this rule places a \textit{stronger} requirement for liveness; $f$ additional proposals are (tentatively) required to link to an anchor. 
We thus retain the existing Direct Commit rule as backup, and allow replicas to commit using whichever rule is satisfied first. In some cases, it may be faster (and more reliable) to commit using $f+1$ certified nodes. For instance, when the network is unstable or point latencies are asymmetric, the fastest $f+1$ replicas might advance much more quickly than the fastest $2f+1$.

In our empirical evaluation, we find that this case is rare (\S\ref{sec:evaluation}) and that replicas are, most of the time, able take advantage of the Fast Direct Commit rule. Shoal's reputation scheme ensures that eligible anchor candidates are typically well connected, making it exceedingly commonplace for future round proposals to link to an anchor.

\subsection{Increasing anchor frequency}
In order to reduce Anchoring latency, \sysname tries to dynamically designate as many nodes as possible as anchors. Intuitively, if all nodes were anchors that become committed, no node would experience any Anchoring latency.\footnote{We note that such an approach is briefly discussed in the closing remarks of Shoal~\cite{spiegelman2023shoal}, but is neither implemented nor carefully evaluated for practical considerations.}
Safely operating more than one anchor per round requires additional orchestration. To ensure a consistently ordered log across all correct replicas, parallel anchors must be committed in a pre-defined order. One easy way to visualize parallel orchestration is by mapping anchors to fixed slots; much like parallel proposals in traditional BFT protocols such as PBFT~\cite{castro2002practical}. Figure \ref{fig:multipleanchors} shows a simple example in which there are four anchors in one round (all $n$ nodes are anchors). Each anchor employs it's \textit{own} consensus instance that may proceed in parallel, and out of order, but anchors can be committed only in sequence. Each consensus instance operates a one-shot Bullshark instance, augmented with the Fast Direct Commit Rule. We remark that each Bullshark instance operates it's own tentative anchor schedule (e.g. the yellow anchor in Fig. \ref{fig:multipleanchors} is considered the next anchor in the instance started by the red anchor), but terminates as soon as the initial anchor is resolved.

\begin{figure}[h]
\vspace{-3pt}
    \centering
    \includegraphics[width=0.3\textwidth]{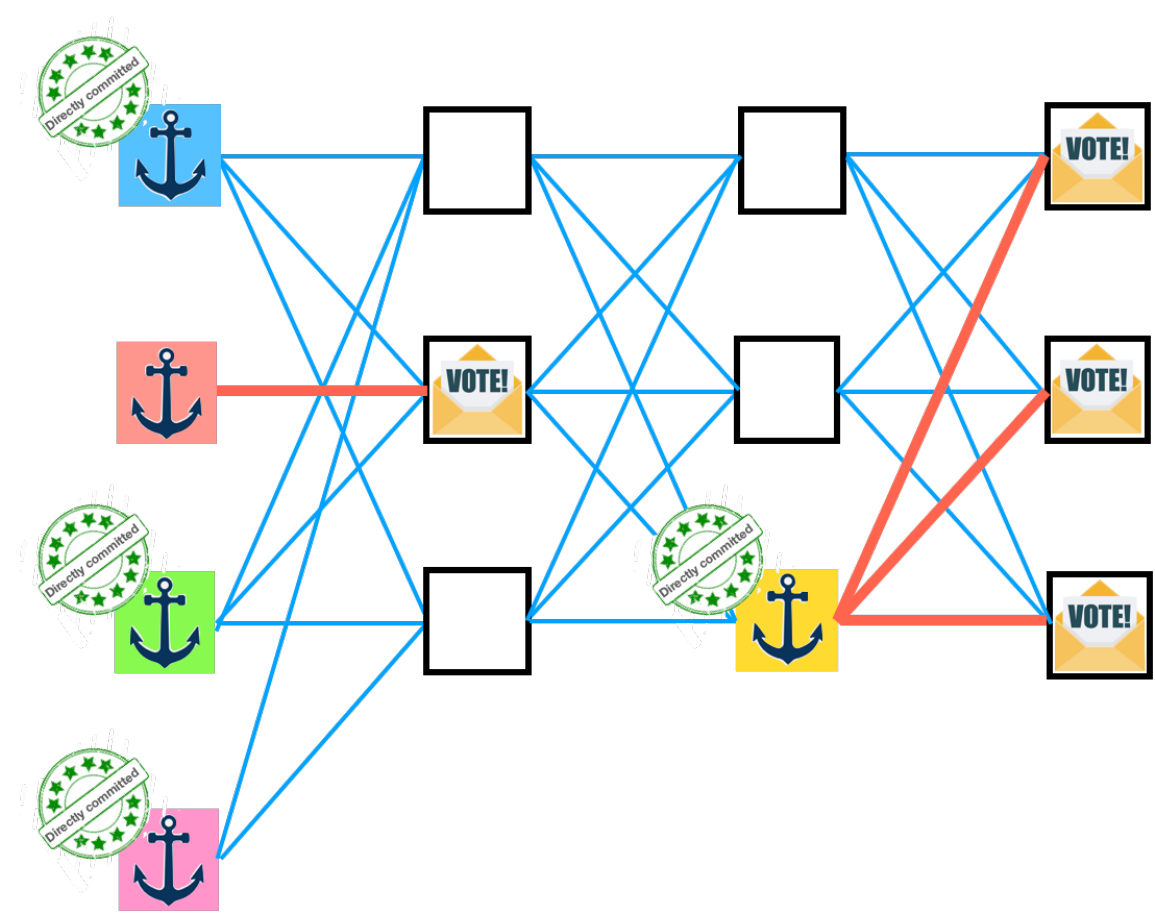}
    \caption{An illustration with four anchors in round $1$, using the pre-assigned order: blue, red, green, pink. The blue anchor commits directly; it has $\geq f+1$ references from nodes in round $2$. The red anchor is only indirectly committed upon directly committing the yellow anchor in round $4$. The green and pink anchors must wait to be ordered until the red anchor is resolved.}
    \label{fig:multipleanchors}
    \vspace{-10pt}
\end{figure}

While simple, a straightforward instantiation of this design is not robust, and may, in practice negate (or even hurt) any latency benefit. By treating every node as an anchor, progress is, in particular, bottlenecked on the \textit{slowest} anchor. We illustrate such an unwelcome scenario in Figure \ref{fig:multipleanchors}: although the green and pink anchors have sufficient support to commit, they must wait until the preceding red anchor becomes committed or skipped. Unfortunately, the red anchor is slow and does not have any immediate links; it is indirectly committed only two rounds later via the next anchor in it's consensus instance (yellow anchor).
In worse circumstances, an anchor might never commit. In this case, we must wait for confirmation that the anchor is skipped, and fill its assigned slot with a no-op. 
Such scenarios are not fringe outliers, but expected by design: since DAG nodes include only $2f+1$ edges, rounds are privy to advancing based on the fastest $2f+1$ nodes, and leaving the remainder behind. Thus, with high probability, \textit{at least} one anchor per round will not receive sufficient direct votes.

Adopting the reputation mechanism proposed by Shoal can help ameliorate, but not remedy the concern. We leverage reputation and modify \textsc{Get\_anchors}(r) to limit the set of eligible anchors per round only to those associated with historically fast and well-connected replicas. 
Unfortunately, we find that in practice leveraging reputation alone is insufficient, and does not reliably enough exclude slow replicas. While past performance is a helpful indicator, network instabilities in geo-distributed systems make it hard to accurately predict \textit{future} performance, and we observe that the set of "slow" replicas changes very dynamically. To improve robustness further, we add two additional techniques that, respectively, try to \one avoid unnecessarily classifying nodes as slow, and \two dynamically skip likely obsolete anchors.

\paragraph{Round Timeouts} Reputation alone is overly harsh in categorizing nodes as slow. Since DAGs can advances with the fastest $2f+1$ replicas, to maximize the likelihood of anchors that commit directly, reputation \textit{pessimistically} caps the eligible anchors per round to include only the (estimated) fastest $2f+1$ nodes. (i.e. $\frac{2}{3}$ of total nodes).
In reality, however, ineligible nodes may only be fractionally slower, and arrive imminently. To avoid discriminating against such nodes, \sysname makes the conscious choice to, in each round, additionally wait for a small timeout beyond the first $2f+1$ nodes observed, and include any momentarily received nodes as additional edges. While this slightly slows down individual rounds, it encourages replicas to operate in lockstep, resulting in a more densely connected DAG. This allows us, in turn, to optimistically increase the set of eligible anchors back to include all $n$ nodes, thus reducing \textit{overall} latency. 
 
We note that timeouts are not necessary for liveness, and are purely optimizing performance. They can thus be configured very aggressively, and are, in practice, negligible compared to the 3 network delays required to certify a node.

\paragraph{Skipping Anchor Candidates} Although round timeouts increase the reliability of the Direct Commit Rule, they cannot guarantee its success. Occasionally, replicas may fail or experience high networking delays, resulting in missing, or unconnected nodes. In such cases, all subsequently pre-ordered anchors must wait until the current anchor is resolved. Recall that an anchor $A1$ in round $r$ is only resolved only upon committing, in its local consensus instance, an anchor $A2$ in a future round $r' > r$: $A1$ is committed if it is present in the causal history of $A2$ (Indirect Commit), or skipped if it is not. In practice, we expect indirectly committed anchors to be rare. Anchors that are merely slow, are likely to be committed thanks to the round timeout. Crashes, in turn, predominantly result in skips. Byzantine abuse
can unfortunately not be avoided, 
and we must rely on the reputation scheme to avoid such anchors; fortunately, such faults are exceedingly rare in most practical deployments.\footnote{Blockchain systems~\cite{sui, aptos} report observing zero Byzantine faults in over a year of deployment. Reported faults are typically benign crashes.} 

We note that, even though $A2$ has been committed, it cannot be ordered before all preceding anchors in the pre-defined schedule have resolved their respective consensus instances (i.e. commit or skip). This is undesirable: such anchors are \one likely already included in the causal history of $A2$, and \two may be subject to further latency delays. Unfortunately we cannot avoid this scenario if $A1$ is committed; another replica may have committed it directly, and resolved (and applied) the output of the next scheduled anchor $B1$. 
We can, however, sidestep the resolution of preceeding anchors' Bullshark instances if $A1$ is skipped. For any given consensus instance, all replicas necessarily agree on the order of \textit{all} committed and skipped anchors. Thus, in particular, they agree on the \textit{first} committed anchor (in this case $A2$, since $A1$ is skipped). Let the round of this anchor be $r'$.
\sysname opts to \textit{skip} all tentative anchor candidates in rounds $<r'$.


Instead of pre-assigning anchors to fixed slots, \sysname considers all but the first anchor in each round \textit{virtual}, and dynamically materializes them as it sees fit. At any given time, there is only a single materialized consensus instance (with an anchor every other round); upon resolving the current instance, \sysname re-evaluates where to sensibly place the next anchor. Only then it begins evaluating the associated consensus instance. Since all replicas resolve each instance identically, each locally re-interprets and materializes the same anchor schedule.
Consider a slightly modification of the example in Figure \ref{fig:multipleanchors}, in which the yellow anchor does \textbf{not} extend the red anchor.  The blue anchor commits directly, and \sysname dynamically materialized the red node as the next anchor to evaluate. The red anchor, however, will be resolved as skipped upon committing the yellow anchor. \sysname thus skips evaluating the virtual green and pink anchors, and instead tries to materialize as anchor the first node \textit{after} the yellow anchor. 

\subsection{Operating multiple DAGs in parallel} 
Finally, in an effort to minimize Queuing latency, \sysname chooses to operate $k$ concurrent DAG instances, and interleave their outputs to construct a single, totally ordered log. Intuitively, staggering multiple DAGs, in a pipline-like manner, allows transactions to be proposed quicker. Instead of waiting (up to 3md) for the next DAG round to begin, a transaction simply joins the next round of the \textit{next DAG}. This reduces the expected Queuing latency from $1.5$md to $1.5/k$ md.
In practice, we deploy \sysname with three DAGs, each offset by one message delay. Since each DAG round requires 3md to certify a node proposal, this ensures that a some DAG's proposal is available every 1md. We find that 3 DAGs offers a good sweetspot between theoretical Queuing latency and batch sizes; additional DAGs offer diminishing returns as proposals carry increasingly smaller batches.

Algorithm~\ref{alg:shoalpp} summarizes the algorithm. Each DAG instance runs as a black-box, isolated from the others. In order to create a consistent total order, we must interleave their outputs. Each time a DAG commits an anchor, it outputs a new log segment.
\sysname rotates across DAGs in round-robin fashion, and appends, in each round, exactly one available segment to the log. If, for whatever reason, one DAG commits more frequently, then it's excess available segments must wait to be ordered until the other DAGs to make progress. Importantly, the individual DAGs themselves never block, and progress unimpeded.
We note that segments also may not (and need not) be of equal size; each segment corresponds to a newly committed anchor, each of which may order varying sizes of causal histories.
\begin{algorithm}[h]
\caption{\sysname}
\begin{algorithmic}[1]
\footnotesize

\Statex \textbf{Local variables:}
      
        \StateX $d_1, d_2, d_3$: DAG instances 
        \Comment{Initially empty}

        \Loop
            \For{i=1 to 3}

            \State \textbf{output} $d_i$.\textsc{next\_ordered\_nodes}()  
            
            \EndFor

        \EndLoop

\end{algorithmic}
\label{alg:shoalpp}
\end{algorithm}



Operating additional DAGs incurs additional messages and associated processing overhead. However, we find that in practice, this does not negatively affect performance. In fact, throughput increases because smaller batches are proposed more frequently which helps amortize bandwidth and processing, resulting in better resource utilization. 
In theory, messages from different DAGs can be overlayed in order to share signatures. 
For instance, a node-proposal of DAG $i \in \{0,1,2\}$ can easily be combined with the certified-node-broadcast message of DAG $(i+1)\%3$.\footnote{In theory, votes can also be overlayed, though typically it is favorable to send them as fast as possible, rather than synchronizing.} This too, we find not to be worth it in practice as \one signatures are not the bottleneck, and \two message stages have assymetric latencies, and artificially aligning them increases Queuing latency.

\subsection{Discussion}
The combination of the above techniques allows \sysname to, in the common, fault-free case, reduce average e2e latency to only 4.5md: 4md to commit anchors, and 0.5md queuing latency. In the presence of faults or bad network behaviors, additional anchoring and commit latency will be incurred to resolve anchors via the Indirect Commit/Skip Rule.

We note that \sysname uses a \textit{linear} star-based communication pattern to ceritfy nodes, while protocols such as PBFT~\cite{castro2002practical} leverages \textit{quadratic} all-to-all communication. If desired, \sysname can adopt all-to-all communication as well, reducing latency by 1md (to 3.5md total) at the cost of increased message complexity. Finally, we remark that although \sysname incurs a small additional queuing latency, every replica acts as proposer, and clients need only contact their \textit{local} replica (in absence of faults). Single leader designs, in contrast, require clients to contact possibly remote leaders, introducing additional "queuing" latency.

\section{Correctness}\label{sec:correctnes}

We now prove the Safety and Liveness of \sysname.
We separately prove each augmentation \sysname introduces through a reduction to the Safety and Liveness proofs of Bullshark~\cite{bullshark} and Shoal~\cite{spiegelman2023shoal}. To formalize the arguments, our proof relies on the following three properties:

\begin{property}
\vspace{-3pt}
\label{prop:history}
    For any node $n$, the call \textsc{causal\_history}(n) returns the same vector of nodes to all replicas.
\end{property}

\begin{property}
\vspace{-3pt}
\label{prop:bullshark}
All replicas commit the same anchors, and in the same order.
\end{property}

\begin{property}
\vspace{-3pt}
\label{prop:leaderreputation}
    For any given round $r$ the leader reputation mechanism ensures that \textsc{get\_anchors}(r) return the same vector of nodes at all replicas. 
\end{property}

Property~\ref{prop:history} holds true for all Narwhal-based DAGs, and follows directly from the fact that all nodes in the DAG are certified; Properties~\ref{prop:bullshark} and \ref{prop:leaderreputation} follow, respectively, from the safety proofs of Bullshark and Shoal.

\paragraph{Fast Direct Commit Rule} Since \sysname can commit using both the Fast Direct Commit Rule and the existing (Bullshark) Direct Commit Rule, Liveness follows directly from Bullshark.   
We prove that the Fast Direct Commit rule is safe: 

\begin{lemma}
\label{lem:das}
    For any anchor $a$, the procedures \textsc{fast\_commit}($a$) and \textsc{run\_bullshark}($a$) never return contradictory values.  
\end{lemma}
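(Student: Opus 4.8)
### Proof Plan

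The plan is to show that if \textsc{fast\_commit}($a$) returns $a$ (i.e., a replica observes $2f+1$ weak votes linking to $a$), then \textsc{run\_bullshark}($a$) must also eventually commit $a$ directly, so the two procedures agree. The converse direction is trivial: if \textsc{fast\_commit} does not return, the \textbf{select} in line~\ref{line:anchortoorder} simply uses the Bullshark output, so there is nothing to prove. Hence the crux is: $2f+1$ weak votes for $a$ $\Rightarrow$ the Bullshark Direct Commit Rule is (eventually) satisfied for $a$, and therefore \textsc{run\_bullshark}($a$) returns $a$ rather than some later anchor.

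First I would recall the bookkeeping: a weak vote for node $a$ in round $a.\textit{round}$ from source $s$ is recorded (line in the \textsc{Receiving}{node n} handler) exactly when the \emph{first} round-$(a.\textit{round}+1)$ proposal received from $s$ lists $a$ among its parents. Since each replica records at most one weak vote per (round, source) pair, the $2f+1$ weak votes come from $2f+1$ distinct sources. Among these, at least $(2f+1)-f = f+1$ are correct replicas. A correct replica never equivocates, so the unique certified round-$(a.\textit{round}+1)$ node it eventually contributes to every replica's DAG is precisely the proposal that was weakly voted — which links to $a$. Therefore at least $f+1$ certified nodes in round $a.\textit{round}+1$ link to $a$, which is exactly the precondition of Bullshark's Direct Commit Rule. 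By the correctness of Bullshark (which we may assume, together with Property~\ref{prop:bullshark}), \textsc{run\_bullshark}($a$) commits $a$ directly, so both procedures return $a$.

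The main obstacle — and the step that needs the most care — is ruling out the possibility that \textsc{run\_bullshark}($a$) returns a \emph{different} anchor before $a$'s direct-commit condition materializes in that replica's local DAG. Here I would lean on Property~\ref{prop:bullshark}: \textsc{run\_bullshark}($a$) is a one-shot Bullshark instance seeded at $a$, and its output (the first anchor it orders) is deterministic given the anchor schedule, which by Property~\ref{prop:leaderreputation} is identical at all replicas. Since we have just shown that in \emph{every} correct replica's eventual DAG view $a$ accrues $f+1$ certified links, $a$ is directly committable in that instance; Bullshark's Indirect Commit Rule then guarantees it is ordered as the first anchor (any earlier-resolved anchor would have to precede $a$ in the schedule, contradicting that $a$ is the seed). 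Thus \textsc{run\_bullshark}($a$) cannot return an anchor that excludes $a$, and consistency with \textsc{fast\_commit}($a$) follows. A minor remaining subtlety is that \textsc{fast\_commit} may return at a replica whose local DAG has not yet caught up; but since it only claims $a$ is committed (not its position relative to other anchors), and the eventual Bullshark resolution agrees, no contradiction can arise.
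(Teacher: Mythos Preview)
Your proposal is correct and follows essentially the same route as the paper: observe that \textsc{fast\_commit}($a$) can only return $a$, reduce to showing that $2f+1$ weak votes force at least $f+1$ correct (hence eventually certified) proposals linking to $a$, and then invoke Bullshark's Direct Commit Rule together with Property~\ref{prop:bullshark} to conclude \textsc{run\_bullshark}($a$) returns $a$ everywhere. You are a bit more explicit than the paper about the weak-vote bookkeeping and about why \textsc{run\_bullshark}($a$) cannot resolve to a later anchor before the $f+1$ certified links materialize locally, but the paper dispatches that concern in one line by appealing directly to Property~\ref{prop:bullshark}.
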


\begin{proof}\phantom{\qedhere}
By the code, \textsc{fast\_commit($a)$} can only return $a$.
Thus we need to prove that if \textsc{fast\_commit($a)$} triggers, then \textsc{run\_bullshark($a$)} also returns $a$. \textsc{fast\_commit($a)$} implies that the replica received $2f+1$ (uncertified) node proposals from different replicas that reference $a$.  Since there are at most $f$ faulty replicas, at least $f+1$ of these proposals were broadcasted by correct replicas and will eventually be certified. Since $f+1$ such certified nodes exist, some replicas may directly commit the anchor $a$ via the Bullshark logic. 
It follows from P.\ref{prop:bullshark} that in this case, the \textsc{run\_bullshark($a$)} call returns $a$ to all replicas. \qed
\end{proof}

\paragraph{Multiple Anchors per Round} Liveness follows directly from Bullshark. We show that \sysname's dynamic anchor mechanism retains Safety.

\begin{lemma}
\label{lem:siblinghelp}
   We assume all correct replicas are initialized with the same state (\emph{round} and \emph{anchors}). Then, all correct replicas' invocations of \textsc{next\_ordered\_nodes} (Alg~\ref{alg:mako}) return the same result. 

\end{lemma}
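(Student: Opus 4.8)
The plan is to prove the lemma by induction on the index $i$ of the invocation of \textsc{next\_ordered\_nodes}, maintaining the invariant that all correct replicas \emph{enter} their $i$-th invocation with identical values of the local variables \emph{round} and \emph{anchors} and \emph{return} the same vector of nodes. The base case $i=1$ is exactly the hypothesis of the lemma. Since by Property~\ref{prop:history} the call \textsc{causal\_history}($n$) returns the same vector at every correct replica for any fixed node $n$, and the procedure returns \textsc{causal\_history}(\textsc{anchor\_to\_order}) on line~\ref{line:return}, it suffices to establish, in the inductive step, two facts: \one every correct replica picks the same node as \textsc{anchor\_to\_order}, and \two every correct replica leaves the invocation with the same \emph{round} and \emph{anchors}. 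Note this is a safety claim --- equal outputs \emph{whenever} two correct replicas both complete the $i$-th invocation; that they eventually complete (i.e. the \textbf{select} and the surrounding \textbf{while} loop terminate) is a liveness property inherited from Bullshark and argued separately.

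\textbf{Deterministic bookkeeping.} First I would trace the loop body. By the induction hypothesis all correct replicas agree on \emph{anchors}, so the test \textsc{anchors.is\_empty}() evaluates identically; when it holds, \emph{round} is incremented in lockstep and \textsc{get\_anchors}(\emph{round}) returns the same vector at all correct replicas by Property~\ref{prop:leaderreputation}. (Should that vector be empty, the \textbf{while} loop merely repeats this same deterministic step, so no generality is lost.) Hence all correct replicas hold identical \emph{anchors} upon reaching line~\ref{line:anchor}, and the subsequent \textsc{pop} yields a common node, which I denote $a$.

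\textbf{The nondeterministic \textbf{select} --- the crux.} On line~\ref{line:anchortoorder} each replica races \textsc{fast\_commit}($a$) against \textsc{run\_bullshark}($a$), and the winner may genuinely differ across replicas and across invocations. I would argue by cases. \emph{Case 1: \textsc{fast\_commit}($a$) triggers at some correct replica.} Then, by the argument of Lemma~\ref{lem:das}, at least $f+1$ certified nodes referencing $a$ eventually exist, so $a$ is directly committable in the one-shot Bullshark instance rooted at $a$; by Property~\ref{prop:bullshark}, \textsc{run\_bullshark}($a$) therefore returns $a$ at \emph{every} correct replica. Consequently, whichever arm of the \textbf{select} wins locally, each correct replica sets \textsc{anchor\_to\_order} $= a$. \emph{Case 2: \textsc{fast\_commit}($a$) triggers at no correct replica.} Then every correct replica's \textbf{select} must resolve through \textsc{run\_bullshark}($a$), which by Property~\ref{prop:bullshark} (all replicas commit the same anchors in the same order, hence agree on the first committed anchor of this instance) returns the same node everywhere. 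Either way, \textsc{anchor\_to\_order} is identical across correct replicas. This step is where the real difficulty lies, and it is exactly Lemma~\ref{lem:das} combined with Property~\ref{prop:bullshark} that collapses the two possible local outcomes into one.

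\textbf{Closing the induction.} Finally, the test \textsc{anchor\_to\_order} $\neq a$ is evaluated identically everywhere, so \textsc{skip\_to} is entered (or not) uniformly; inside it, \emph{round} is set to \textsc{anchor\_to\_order}'s round, \emph{anchors} to \textsc{get\_anchors}(\emph{round}) --- the same vector by Property~\ref{prop:leaderreputation} --- and \textsc{anchor\_to\_order} is removed from it, while if \textsc{skip\_to} is not entered the already-popped \emph{anchors} is likewise common to all. Thus all correct replicas exit with identical \emph{round} and \emph{anchors}, and return \textsc{causal\_history}(\textsc{anchor\_to\_order}), which by Property~\ref{prop:history} is the same vector. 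This re-establishes the invariant and closes the induction.
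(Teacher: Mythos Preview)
Your proof is correct and follows essentially the same approach as the paper's: induction on successive invocations of \textsc{next\_ordered\_nodes}, using Property~\ref{prop:leaderreputation} for the common anchor, Lemma~\ref{lem:das} and Property~\ref{prop:bullshark} to collapse the nondeterministic \textbf{select} to a single \textsc{anchor\_to\_order}, and Property~\ref{prop:history} for the returned vector. Your treatment is in fact more careful than the paper's---in particular, your explicit two-case analysis of the \textbf{select} and your tracing of \textsc{skip\_to} make rigorous what the paper compresses into a one-line appeal to ``determinism of \textsc{next\_ordered\_nodes}.''
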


\begin{proof}\phantom{\qedhere}
By the lemma assumption and Property~\ref{prop:leaderreputation}, anchors.\textsc{pop}() returns the same anchor in the first iteration of the while loop to all replicas (Line~\ref{line:anchor}). 
By Lemma~\ref{lem:das}, all correct replicas agree on which anchor to order (Line~\ref{line:anchortoorder}). It follows, from the determinism of \textsc{next\_ordered\_nodes} and P.\ref{prop:history} that all replicas return the same vector of nodes to be ordered in their first \textsc{next\_ordered\_nodes} call (line~\ref{line:return}).
In addition, by the determinism of \textsc{next\_ordered\_nodes}, all replicas end up with the same state (\emph{round} and \emph{anchors}) after the calls return.
Therefore, by applying the argument above inductively, it is easy to show that all replicas also return the same vector of nodes in all future calls. \qed

    
\end{proof}




    

\begin{theorem}
    The \sysname protocol Algorithm~\ref{alg:shoalpp} satisfies Safety and Liveness.
\end{theorem}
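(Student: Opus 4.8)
The plan is to reduce the Safety and Liveness of the full three-DAG protocol (Algorithm~\ref{alg:shoalpp}) to the corresponding guarantees of each individual DAG instance, which are captured by Lemmas~\ref{lem:das} and~\ref{lem:siblinghelp} together with Properties~\ref{prop:history}--\ref{prop:leaderreputation}, and then argue that the deterministic round-robin interleaving preserves them.

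For Safety, I would first observe that each DAG instance $d_i$ is self-contained: its nodes, certificates, weak votes, and (virtual/materialized) anchor schedule are disjoint from those of the other instances, so $d_i$ behaves exactly like a stand-alone single-DAG \sysname execution. Applying Lemma~\ref{lem:siblinghelp} inductively, with the common genesis state (empty DAG, \emph{round} $=0$, empty \emph{anchors}) as the base case, every correct replica obtains the identical sequence of node vectors from successive calls to $d_i.\textsc{next\_ordered\_nodes}()$; by Property~\ref{prop:history} each such vector is itself identical across replicas, and by the underlying Bullshark construction consecutive vectors from the same instance are disjoint. Hence each of the three per-DAG output streams is agreed upon by all correct replicas. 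The main loop of Algorithm~\ref{alg:shoalpp} then appends these streams to the global log in a fixed, data-independent order ($d_1$, then $d_2$, then $d_3$, repeating), so the global log at any correct replica is a deterministic function of the three agreed streams. Consequently no two correct replicas ever disagree on a committed prefix, which is exactly Safety.

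For Liveness, the key step is to show that every call $d_i.\textsc{next\_ordered\_nodes}()$ eventually returns after GST, so that the round-robin loop never gets stuck and the global log grows without bound. Within a single instance this splits into three sub-claims: (a) the \textbf{select} on Line~\ref{line:anchortoorder} terminates because its \textsc{run\_bullshark} branch inherits Bullshark's post-GST liveness --- the Fast Direct Commit rule and the round timeouts are layered purely as optimizations and, as noted in \S\ref{sec:background}, are never required for progress; (b) \textsc{skip\_to} and the anchor-skipping logic only ever advance \emph{round}, so they cannot cause an instance to spin without emitting output; and (c) every transaction submitted to a correct replica is eventually placed in a node of each DAG (correct replicas keep proposing in all three), that node is eventually certified and added to all local DAGs via reliable broadcast, and after GST it is absorbed into the causal history of an eventually-committed anchor, since Shoal's reputation-based anchoring combined with Bullshark's liveness bounds the anchoring latency. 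Combining (a)--(c): each instance individually makes progress and orders every submitted transaction, the loop cycles through all three instances infinitely often, and every transaction is eventually appended to the global log.

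I expect the main obstacle to be sub-claim (c) in the multi-anchor setting, i.e.\ showing that the dynamic materialization and skipping of virtual anchors cannot indefinitely starve a node from being ordered. The argument is that virtual anchors are only ever skipped in favor of a strictly later \emph{committed} anchor, and that committed anchor, being committed, carries in its causal history all the reliably-broadcast nodes of the skipped-over rounds; thus every node is still ordered as part of some committed anchor's causal history, and liveness of the anchor sequence itself follows from Property~\ref{prop:bullshark} and Bullshark's liveness exactly as in Shoal. A secondary subtlety to spell out is that the three per-DAG streams need not advance at equal rates, so the (bounded) backlog of ready-but-not-yet-interleaved segments at a fast DAG must be shown to drain --- which it does, precisely because every other DAG is itself live by the argument above.
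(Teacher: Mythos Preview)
Your proposal is correct and follows the same reduction the paper uses: per-instance Safety/Liveness from Lemmas~\ref{lem:das} and~\ref{lem:siblinghelp} (and the Bullshark/Shoal properties), then composition via deterministic round-robin interleaving. The paper's own proof is a two-sentence sketch to exactly this effect; you have simply fleshed out the sub-claims (termination of the \textbf{select}, monotone \textsc{skip\_to}, eventual inclusion of nodes, backlog drainage) that the paper leaves implicit.
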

The Liveness and Safety proof of each DAG is given above. Since DAG outputs are interleaved deterministically, Safety and Liveness of the composed system follows immediately.

\section{Practical Considerations} 
\label{sec:practical}
We briefly discuss some implementation considerations.

\paragraph{Inline data streaming} 
Narwahl~\cite{narwhaltusk} proposes, in addition to the DAG construction, to further decouple, and horizontally scale data dissemination using an additional "worker"-layer. Workers disseminate batches optimistically a-priori, and DAG proposals contain only hash references. This construction is extremely throughput-friendly in the best case, as data is streamed continuously, and consensus carries only metadata. In practice, however, much like with uncertified DAG-BFT protocols~\cite{keidar2022cordial, mysticeti}, other replicas might not have the data behind the hashes (or the references in the case of uncertified DAGs) when the proposal is received and thus additional 2md latency is required for fetching.  
\sysname opts to forgo the worker layer in order to avoid this latency, and chooses to disseminate transaction batches inline with DAG proposals. Operating staggered DAGs allows \sysname to nonetheless achieve throughput competitive to that of Narwahl (with a single worker), as smaller batches are sent more frequently, thus amortizing bandwidth close to a streaming design.\footnote{Scaling throughput by employing more workers is typically not worth it in existing BFT deployments, as single-worker ordering throughput in existing protocols already far exceeds execution bandwidth.}
Finally, we note that inline data streaming simplifies data fetching as it avoids an additional indirection needed to dereference digests.

\paragraph{Efficient fetching}
DAG edges in \sysname contain only certified nodes, allowing \sysname to efficiently fetch missing nodes and resolve inconsistencies across replicas. Replicas whose DAG views are out of sync, may validate and vote on proposals \textit{without} locally observing the proposed node's causal history. This streamlines the certification process, and ensures that the DAG operates at predictable rates. Any missing causal histories can be inferred asynchronously, off the critical path. We remark that this is not possible in uncertified DAGs; to ensure liveness, replicas must fetch all missing data \textit{before} processing a new node. This results in unsteady processing times, and may significantly increase e2e latency.
Certified edges, additionally allow us to balance data fetching load: since at least $f+1$ correct replicas must have observed any certify node, replicas missing the node may request it from different replicas to balance load.

\paragraph{Distance-based priority broadcast}
We observe that in systems with large $n$, message broadcasting consumes considerable CPU times. When sending messages in a fixed order (e.g. first to $R1$, then to $R2$, ...) can unintentionally cause replicas to consistently receive messages with delay, making their nodes subject to falling behind in the DAG.
To circumvent this issue, \sysname, periodically records point latencies between replicas, and adjust broadcast send-orders to prioritize farther away replicas. This results in a more balanced message distribution, increasing the homogeneity of node certification pace.


\section{Evaluation} 
\label{sec:evaluation}

Our evaluation seeks to to answer three questions:
\begin{enumerate}
    \item \textbf{Latency reduction:} How much does \sysname reduce Latency compared to state-of-the-art alternatives?
     \item \textbf{Latency breakdown:} How much does each of the augmentations introduced in \sysname contribute to the latency reduction?
    \item \textbf{Robustness:} How well does \sysname tolerate faults?
\end{enumerate}

We implemented \sysname into an open-source code base of Aptos, one of the major blockchain projects. Our prototype is written in Rust and utilizes the Tokio~\cite{tokio} 
asynchronous runtime. It uses BLS~\cite{bls2001} implemented over BLS12-381 curves for signatures, 
RocksDB~\cite{rocksdb} for persistent storage of consensus data, and Noise~\cite{noise}
for authentication.
We disable the execution and ledger storage components of the original blockchain in order to isolate consensus performance. 
Each data point represents the 50 percentile (median) with error bars representing the 25, and 75th percentiles.

 \begin{figure*}[t]
    \centering
    \includegraphics[width=0.95\textwidth]{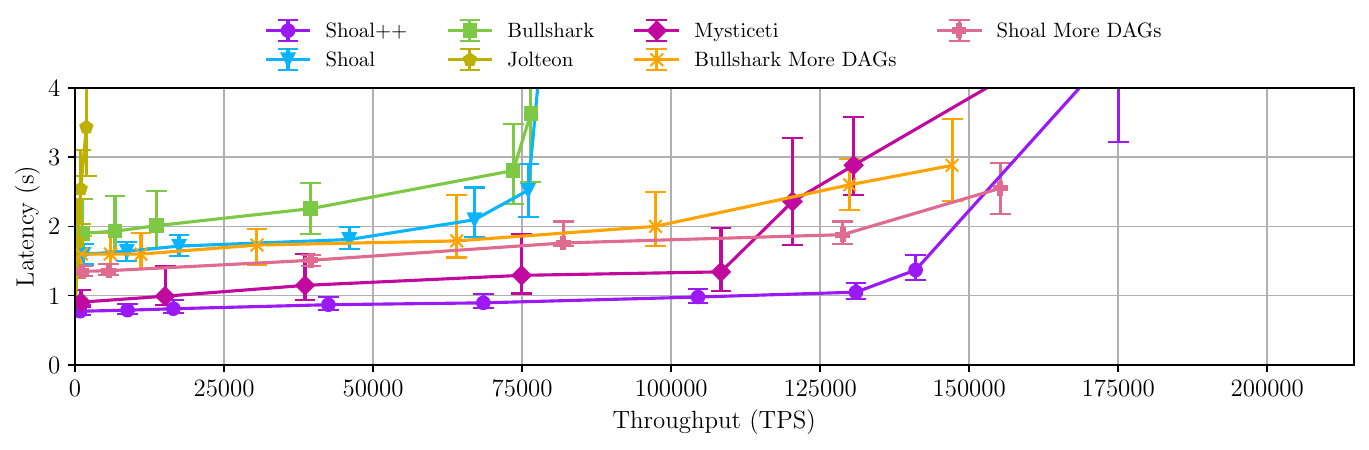}
    \vspace{-10pt}
    \caption{No failures with 100 replicas.}
    \label{fig:no-failures-macro}
\end{figure*}

\paragraph{Baselines} We compare \sysname against three popular high throughput DAG-BFT consensus protocols, Bullshark~\cite{bullsharksync}, Shoal~\cite{spiegelman2023shoal}, and Mysticeti~\cite{mysticeti}, as well as one traditional low-latency BFT protocol, Jolteon~\cite{jolteon}.
Bullshark and Shoal represent state-of-the-art \textit{certified} DAG-based protocols, while Mysticeti~\cite{mysticeti} is a concurrent work to \sysname that proposes an \textit{uncertified} design to reduce latency.
Jolteon~\cite{jolteon} is a state-of-the-art leader-based BFT protocol, which improves Hotstuff~\cite{hotstuff} latency by 50\%.
A variant of Jolteon is currently deployed on Aptos~\cite{aptos}.  Sui~\cite{sui} recently replaced Bullshark with a Mysticeti deployment. 
For an apples-to-apples comparison, we re-implemented Bullshark, Shoal, and Jolteon according to the papers' description in the same codebase as \sysname. For Mysticeti we run the publicly available source code~\cite{mysticeti-code} referred to in the latest version of the paper; its prototype too is written in Rust, using Tokio, but forgoes writing consensus data to persistent storage, making it less production-realistic. 
To minimize latency, we opt to disable the Narwhal~\cite{narwhaltusk} worker layer for both Bullshark and Shoal. However, for a fair throughput comparison with \sysname, we additionally augment both systems with \sysname's proposed parallel-DAG strategy ("Bullshark/Shoal More DAGs", Fig.~\ref{fig:no-failures-macro}).


 
\paragraph{Experimental setup}
 We use the Google Cloud Platform~\cite{gcp} to mimic the deployment of a globally decentralized network. Our testbed consists of 100 replicas spread evenly across 10 regions around the world: we choose two regions in the US (us-west1 and us-east1), two in Europe (europe-west4 and europe-southwest1), three in Asia (asia-northeast3, asia-southeast1, and asia-south1), and, respectively, one each in South America (southamerica-east1), South Africa (africa-south1) and Australia (australia-southeast1).
 The round-trip times between these regions range between 25ms and 317ms.
 We use \texttt{n2d-standard-64} virtual machines, containing 64 vCPUs and 256 GB of memory~\cite{machines}.
 We provisioned a 2TB network attached disk to each machine to guarantee enough IOPS for persistence.
 This spec is similar to those used by production blockchains and qualifies as commodity grade.
 
Clients connect to a single (local) replica and issue a continuous stream of dummy transactions (310 random bytes). As is standard in blockchain deployments, we measure consensus latency as the time between when a transaction first arrives at a replica, and when the transaction is ordered and ready to be executed.
 We use a batch size of 500 transactions across all systems. In all DAG-based systems, each node proposal contains one batch; Jolteon proposals may contain up to 100 batches.
We configure Jolteon to use a 1.5s timeout for view changes (the standard in existing production systems~\cite{aptos}), and use a round timeout of 600ms for Bullshark, Shoal and \sysname; Mysticeti by default uses a 1s round timeout. We remark that timeouts in Bullshark, Shoal and \sysname serve different functions. Bullshark requires them for liveness, while Shoal manages to sidestep using timeouts for liveness all but edge cases; \sysname follows Shoal, but re-introduces a timeout only for improved performance. We note that the timeout is started at the \textit{beginning} of a round, and thus typically introduces only a small additional wait past observing the first $2f+1$ certified nodes.

\subsection{Failure-free case}
\label{sec:nofailures}

Figure~\ref{fig:no-failures-macro} illustrates throughput and latency under failure-free conditions. \sysname is the only system to sustain sub-second latency for throughput of 100k transactions per second (tps).

Jolteon offers low latency (900ms) at low load, but is unable to scale throughput beyond 2100 tps as it is bottlenecked by the leader's network bandwidth. Bullshark and Shoal support significantly higher throughput (up to 75k tps) but incur high latency: 1.9s and 1.45s, respectively, at low load, and 2.4s and 1.7s, respectively at 50k tps. This follows directly from their high average number of required message delays (\S\ref{sec:dissect}). Throughput scalability in both systems is bottlenecked by network utilization. 

\sysname significantly reduces latency (775ms at low load), while simultaneously improving throughput scalability (up to ca. 140k tps).
The parallel DAG construction allows \sysname to better utilize network and storage resources: rather than broadcasting and processing a large batch once per round, it broadcasts and processes three smaller batches at the same time. 
Applying the same technique to Bullshark and Shoal ("More DAGs") allows both systems to match the throughput of \sysname (while simultaneously improving queuing latency), without incurring the latency penalty of a Narwahl-style worker layer.


Mysticeti matches the throughput of \sysname, but fails to match its latency, even at moderate loads. At high loads (upwards of 100k tps) Mysticeti notably incurs a signficant latency penalty as replicas may be temporarily out of sync, and must fetch missing data on the critical path of consensus.

\begin{figure}[t]
    \centering
    \includegraphics[width=0.95\columnwidth]{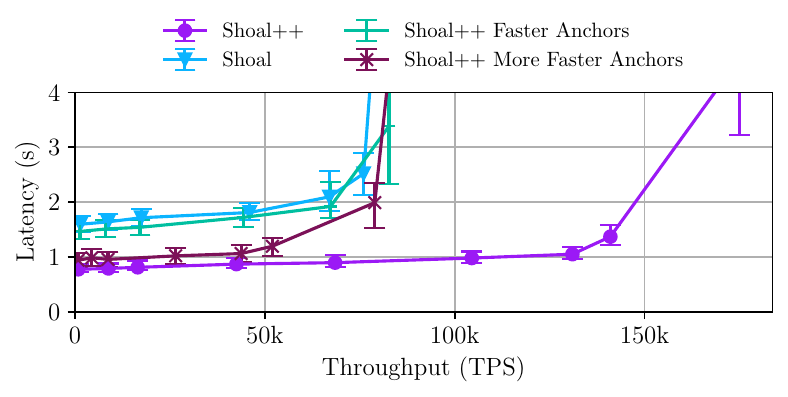}
    \vspace{-10pt}
    \caption{\sysname Breakdown. No failures with 100 geo-distributed replicas.}
    \label{fig:no-failures-micro}
\end{figure}


\begin{figure}[t]
    \centering
    \includegraphics[width=0.95\columnwidth]{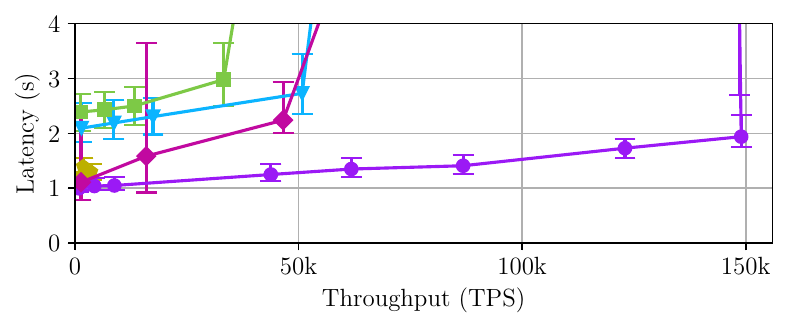}
    \vspace{-10pt}
    \caption{Latency/Throughput graph with 33 out of 100 crashed replicas.}
    \label{fig:failures-macro}
    \vspace{-10pt}
\end{figure}

\subsection{Latency improvement breakdown}
\label{sec:breakdown}

Figure~\ref{fig:no-failures-micro} isolates the respective latency improvement of each \sysname augmentation compared to the baseline, Shoal. \sysname \textit{Faster Anchors} shows Shoal augmented only with the Fast Direct Commit Rule while \sysname \textit{More Faster Anchors} additionally applies the multiple anchor augmentation. Finally, \sysname includes also the parallel DAG optimization.

The Fast Commit rule, in theory, improves anchor commit latency by up to 2md. In real, asymmetric network settings, however, it is not always 2md faster than the Direct Commit rule, and thus the practical latency improvement is smaller. The multiple anchor augmentation, in turn, improves latency significantly: \one it saves, on average, 3md by eliminating the anchoring latency for most nodes (those that become anchors), and \two ensures that the anchoring latency for non-anchor nodes is typically only one DAG round because (with overwhelming probability) nodes are referenced by at least one of the (many) anchors in the next round.   
The parallel DAG framework improves latency further, by reducing queuing latency, but more importantly improves throughput scalability by approaching a data "streaming" effect; crucially, and unlike Narwhal-based worker designs, it does so \textit{without} incurring additional latency.



\subsection{Failure case}
\label{sec:failures}

To quantify the robustness of \sysname, relative to the baseline systems, we evaluate performance under two types of disruptions. 

First, we measure the performance of all systems while simulating crash failures of 33 (out of 100) replicas (Fig.~\ref{fig:failures-macro}). The performance of Jolteon remains largely unaffected, as its leader reputation mechanism quickly detects failures and subsequently elects only alive replicas. Shoal and \sysname, likewise, swiftly adjust and elect only live anchors; latency, however, increases (up to ca. 2x at high load) as quorums must span more regions, increasing tail latency.
Bullshark and Mysticeti, in contrast, suffer drastically increased latency since they do not employ a leader reputation mechanism to elect suitable anchor candidates, and thus must "wait" to commit until a non-faulty replica is elected as anchor. 

Second, we measure the impact of sporadic message drops for \sysname and Mysticeti (Fig.~\ref{fig:packet-drop}). We inject network layer message drops for 1\% of egress traffic in 5 nodes (out of 100 total), beginning at 60 seconds (red line). Under normal network conditions and a low-moderate load (18k tps) both \sysname and Mysticeti offer 700ms median consensus latency. Upon failure injection the latency of Mysticeti rises sharply (by a factor of 10x) as replicas scramble to perform critical-path synchronization on missing data (resulting, at times, in timeouts). As latency spikes, throughput initially drops as well, but recovers once the backlog of disseminated DAG proposals commit. \sysname, in contrast, remains largely unaffected by message drops (latency rises to at most 1.3x). Because all nodes are \textit{certified}, the DAG construction process in \sysname proceeds smoothly; any required synchronization is asynchronous and off the critical path.



\begin{figure}[t]
\centering
\includegraphics[width=\columnwidth]{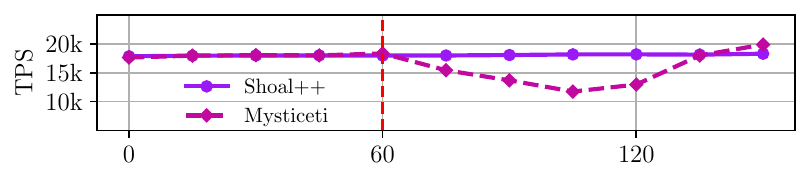}
\includegraphics[width=\columnwidth]{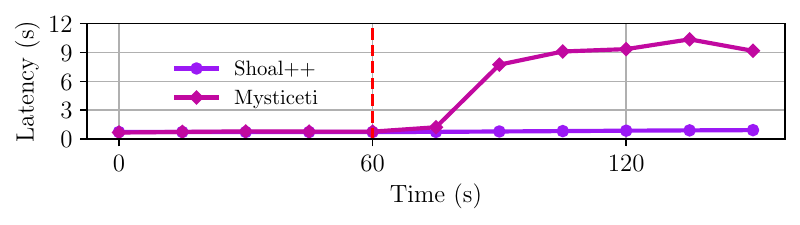}
\vspace{-22pt}
\caption{Impact of message drops in Certified vs Uncertified DAGs.}
\label{fig:packet-drop}
\vspace{-12pt}
\end{figure}


\section{Related work} \label{sec:related}

\paragraph{Latency optimal BFT} Spurred by the seminal PBFT~\cite{castro2002practical} protocol, a long line of systems~\cite{zyzzyva, sbft, martin2006fast, suri2021basil, jolteon} optimize for consensus latency. PBFT-based protocols designate a single leader to act as sequencer, and replace it only sporadically (e.g upon failure, or eagerly for fairness). This approach enjoys excellent latency. PBFT  requires, when led by a correct replica, only 3 message delays (md); the optimum achievable latency in presence of faults, and optimal resilience of $n=3f+1$. Several followup works suggest optimistic Fast Paths~\cite{zyzzyva, sbft, martin2006fast, suri2021basil} that improve latency to only 2md in absence of faults, or by weakening resilience to $n=5f+1$.
Unfortunately, single leader-based designs are fundamentally bottlenecked in throughput by the bandwidth and processing capacity of a single replica.

\paragraph{High throughput BFT} 
The key idea to scale beyond this bottleneck
was realizing that networking resources must be fully utilized.
Some works propose to decouple data dissemination from the consensus logic~\cite{narwhaltusk, dispersedledger, arun2022dqbft}, while others~\cite{MirBFT, gupta2021rcc, stathakopoulou2022state} propose multi-log frameworks: these systems partition the request space and operate multiple parallel black-box instances of consensus protocols (e.g. PBFT) -- each led by a different replica --, and carefully intertwine their outputs to construct a single, totally ordered log. This approach can achieve high throughput but requires complex coordination to deal with failures and re-configuration.

\paragraph{DAG BFT}
More recently, a new class of so-called DAG-BFT protocols have emerged as the popular choice for high throughput BFT. At a high level, deployed DAG-BFT protocols propose to explicitly separate data dissemination from consensus, and utilizing only a single common DAG data structure to implement consensus (\S\ref{sec:background}).

HashGraph~\cite{hashgraph} was the first to propose a DAG-based consensus structure, and Danezis et al.~\cite{embedding} showed that any deterministic BFT protocol can, in theory, be embedded on a DAG. Aleph~\cite{aleph} and DAG-Rider~\cite{allyouneed} were the first to introduce additional rigidity to the DAG construction, leveraging a round-based structure to design asynchronous BFT protocols. Blockmania~\cite{blockmania} was the first attempt to implement a DAG-BFT system, but fell short of adoption as it inherited a complex view change procedure, notorious to traditional BFT protocols such as PBFT.

Narwhal and Tusk~\cite{narwhaltusk} catapulted DAG-BFT to popularity by demonstrating that \one (asynchronous) consensus can be embedded onto DAG-structures without the need for explicit view changes (Tusk), and \two throughput can be scaled nearly linearly through the addition of a horizontally scalable worker layer (Narwahl Mempool). Bullshark~\cite{bullshark} improved upon Tusk by adopting partial synchrony and introducing a common case fast path during synchronous intervals; in a future iteration~\cite{bullsharksync} Bullshark removed the asynchronous fallback, and established itself as the de facto partially synchronous DAG-BFT protocol.

Bullshark designates, at regular intervals (every two rounds), DAG-nodes to simulate a leader (anchor) akin to traditional BFT protocols, and commits them upon observing DAG patterns that ensure durable agreement of the leader.
We refer to \S\ref{sec:background} for an overview of the Narwahl core DAG, and the partially synchronous Bullshark consensus protocol. Shoal improves the latency of Bullshark by \one increasing anchor frequency to every round, \two utilizing a deterministic reputation scheme to select as anchors only the fastest, and best-connected replicas, and \three eliminating leader/anchor timeouts in all but extremely unlikely scenarios. Shoal, in a closing remark, briefly discusses the idea of multiple anchor per round; however, it neither provides an implementation, nor detailed consideration of practical details.
Unfortunately, best case (failure free) average latency of state of the art partially synchronous DAG-BFT protocols remains as high as 10.5md (Shoal). 

\paragraph{Uncertified DAGs} A recent group of protocols propose the design of uncertified DAGs. Such constructions can, in theory, reduce latency, but are increasingly brittle as they are prone to data fetching on the critical path (\S\ref{sec:dissect}). Cordial Miners~\cite{keidar2022cordial} replaces the Reliable Broadcast (RB) in each round with best-effort broadcast (BEB), allowing anchors to (in the best case) commit within 3md; anchors are placed every other round, for an average latency of 4.5md. 
Similarly, BBCA-chain~\cite{bbca} replaces non-anchor nodes with BEB, and strengthens anchor nodes to single-shot pbft instances. Unfortunately, neither proposal has an implementation that empirically evaluates practical performance. 

\paragraph{Concurrent Latency reduction efforts}
In concurrent efforts to \sysname, several recent (non-peer reviewed) protocols suggest latency improvements, underlining the demand for high throughput systems with lower latency.

Sailfish~\cite{Sailfish} introduces a new anchor commit rule that resembles the Fast Direct Commit Rule proposed in ~\sysname, but unlike \sysname forgoes leveraging the Bullshark Direct Commit rule as potentially faster alternative. Unfortunately, Sailfish lacks an implementation that empirically substantiates their suggested latency improvement.

Mysticeti improves upon Bullshark's best case latency by transitioning to an uncertified DAG and implementing the Cordial Miners~\cite{keidar2022cordial} consensus protocol. Mysticeti, like \sysname, adopts Shoals high-level proposal to optimistically employ multiple anchors per round. 
Our evaluation demonstrates that Mysticeti almost matches \sysname's latency in the common case. However, as shown in \S\ref{sec:evaluation}, and independently corroborated by Autobahn~\cite{giridharan2024autobahn}, uncertified DAG constructions are prone to data fetching on the critical path, which negates the suggested latency benefits. A single Byzantine (or slow) replica, for instance, can impose at least 2 additional message delays \textit{per} round in Mysticeti by not fully (or timely) disseminating its node proposals. In \S\ref{sec:evaluation} we observed 10x latency degradation when 5 out of 100 nodes experienced 1\% message drop.  

Autobahn~\cite{giridharan2024autobahn} introduces a promising DAG-\textit{free} consensus approach in which replicas efficiently and reliably disseminate data independently in parallel "data lanes", and consensus commits only snapshot cuts of the lane states.


\paragraph{Leader reputation in BFT} 
To the best of our knowledge, leader reputation for BFT systems was first implemented in DiemBFT~\cite{diembft}, and formalized by Carousel~\cite{cohen2022aware}. Shoal~\cite{spiegelman2023shoal} and later Hammerhead~\cite{hammerhead} adopt leader reputation to DAG-BFT in an effort to exclude underperforming leaders from anchor candidacy. \sysname extends the approach to provide, in each round, a set of anchor candidates.

\section{Conclusion} \label{sec:conclusion}
This work introduces \sysname, a DAG-based BFT consensus protocol that offers high throughput, while significantly lowering e2e latency compared to current state of the art DAG-BFT protocols. \sysname takes as its starting point existing state of the art DAG-BFT protocols, and augments them in three key ways: \one it reduces, in the common case, the commit latency of leaders (anchors), \two dynamically reinterprets anchor schedules to increase anchor frequency, and \three operates multiple parallel DAG instances to increase proposal frequency. This, in sum, allows \sysname to reduce the fault-free average e2e latency from 10.5 message exchanges (Shoal~\cite{spiegelman2023shoal}) to only 4.5 message exchanges, within reach of traditional latency optimal (but throughput-inefficient) BFT protocols such as PBFT. 

We note that the parallel DAG instances technique might be of independent interest, as this idea can be applied to any BFT algorithm to reduce its queuing latency.


\ifdefined\cameraReady


\fi

\bibliographystyle{ACM-Reference-Format}
\bibliography{references}


\begin{thebibliography}{45}


\ifx \showCODEN    \undefined \def \showCODEN     #1{\unskip}     \fi
\ifx \showDOI      \undefined \def \showDOI       #1{#1}\fi
\ifx \showISBNx    \undefined \def \showISBNx     #1{\unskip}     \fi
\ifx \showISBNxiii \undefined \def \showISBNxiii  #1{\unskip}     \fi
\ifx \showISSN     \undefined \def \showISSN      #1{\unskip}     \fi
\ifx \showLCCN     \undefined \def \showLCCN      #1{\unskip}     \fi
\ifx \shownote     \undefined \def \shownote      #1{#1}          \fi
\ifx \showarticletitle \undefined \def \showarticletitle #1{#1}   \fi
\ifx \showURL      \undefined \def \showURL       {\relax}        \fi
\providecommand\bibfield[2]{#2}
\providecommand\bibinfo[2]{#2}
\providecommand\natexlab[1]{#1}
\providecommand\showeprint[2][]{arXiv:#2}

\bibitem[dig({[n.\,d.]})]%
        {digital-euro}
 \bibinfo{year}{[n.\,d.]}\natexlab{}.
\newblock \bibinfo{title}{The Digital Euro}.
\newblock
\newblock
\urldef\tempurl%
\url{https://www.ecb.europa.eu/euro/digital_euro/html/index.en.html}
\showURL{%
\tempurl}
\newblock
\shownote{Accessed: May 2024}.


\bibitem[mac({[n.\,d.]})]%
        {machines}
 \bibinfo{year}{[n.\,d.]}\natexlab{}.
\newblock \bibinfo{title}{GCP n2d machines}.
\newblock
\newblock
\urldef\tempurl%
\url{https://cloud.google.com/compute/docs/general-purpose-machines}
\showURL{%
\tempurl}
\newblock
\shownote{Accessed: May 2024}.


\bibitem[gcp({[n.\,d.]})]%
        {gcp}
 \bibinfo{year}{[n.\,d.]}\natexlab{}.
\newblock \bibinfo{title}{Google Cloud Platform}.
\newblock
\newblock
\urldef\tempurl%
\url{{https://cloud.google.com/}}
\showURL{%
\tempurl}
\newblock
\shownote{Accessed: May 2024}.


\bibitem[ccf({[n.\,d.]})]%
        {ccf}
 \bibinfo{year}{[n.\,d.]}\natexlab{}.
\newblock \bibinfo{title}{Microsoft Confidential Consortium Framework}.
\newblock
\newblock
\urldef\tempurl%
\url{https://www.microsoft.com/en-us/research/project/confidential-consortium-framework/}
\showURL{%
\tempurl}
\newblock
\shownote{Accessed: May 2024}.


\bibitem[mys({[n.\,d.]})]%
        {mysticeti-code}
 \bibinfo{year}{[n.\,d.]}\natexlab{}.
\newblock \bibinfo{title}{Mysticeti codebase}.
\newblock
\newblock
\urldef\tempurl%
\url{https://github.com/asonnino/mysticeti/tree/paper}
\showURL{%
\tempurl}
\newblock
\shownote{Commit 96fd831}.


\bibitem[noi({[n.\,d.]})]%
        {noise}
 \bibinfo{year}{[n.\,d.]}\natexlab{}.
\newblock \bibinfo{title}{Noise Cyrpto Framework}.
\newblock
\newblock
\urldef\tempurl%
\url{https://github.com/noiseprotocol/noise_spec}
\showURL{%
\tempurl}
\newblock
\shownote{Accessed: May 2024}.


\bibitem[roc({[n.\,d.]})]%
        {rocksdb}
 \bibinfo{year}{[n.\,d.]}\natexlab{}.
\newblock \bibinfo{title}{RocksDB}.
\newblock
\newblock
\urldef\tempurl%
\url{https://rocksdb.org}
\showURL{%
\tempurl}
\newblock
\shownote{Accessed: May 2024}.


\bibitem[tok({[n.\,d.]})]%
        {tokio}
 \bibinfo{year}{[n.\,d.]}\natexlab{}.
\newblock \bibinfo{title}{Tokio: Rust Asynchronous Runtime}.
\newblock
\newblock
\newblock
\shownote{Accessed: May 2024}.


\bibitem[Arun and Ravindran(2022)]%
        {arun2022dqbft}
\bibfield{author}{\bibinfo{person}{Balaji Arun} {and} \bibinfo{person}{Binoy Ravindran}.} \bibinfo{year}{2022}\natexlab{}.
\newblock \showarticletitle{Scalable Byzantine Fault Tolerance via Partial Decentralization}.
\newblock \bibinfo{journal}{\emph{Proc. VLDB Endow.}} \bibinfo{volume}{15}, \bibinfo{number}{9} (\bibinfo{date}{May} \bibinfo{year}{2022}), \bibinfo{pages}{1739–1752}.
\newblock
\showISSN{2150-8097}
\urldef\tempurl%
\url{https://doi.org/10.14778/3538598.3538599}
\showDOI{\tempurl}


\bibitem[Babel et~al\mbox{.}(2023)]%
        {mysticeti}
\bibfield{author}{\bibinfo{person}{Kushal Babel}, \bibinfo{person}{Andrey Chursin}, \bibinfo{person}{George Danezis}, \bibinfo{person}{Lefteris Kokoris-Kogias}, {and} \bibinfo{person}{Alberto Sonnino}.} \bibinfo{year}{2023}\natexlab{}.
\newblock \showarticletitle{Mysticeti: Low-Latency DAG Consensus with Fast Commit Path}.
\newblock \bibinfo{journal}{\emph{arXiv preprint arXiv:2310.14821}} (\bibinfo{year}{2023}).
\newblock


\bibitem[Baird(2016)]%
        {hashgraph}
\bibfield{author}{\bibinfo{person}{Leemon Baird}.} \bibinfo{year}{2016}\natexlab{}.
\newblock \showarticletitle{The swirlds hashgraph consensus algorithm: Fair, fast, byzantine fault tolerance}.
\newblock \bibinfo{journal}{\emph{Swirlds Tech Reports SWIRLDS-TR-2016-01, Tech. Rep}} (\bibinfo{year}{2016}).
\newblock


\bibitem[Boneh et~al\mbox{.}(2001)]%
        {bls2001}
\bibfield{author}{\bibinfo{person}{Dan Boneh}, \bibinfo{person}{Benjamin Lynn}, {and} \bibinfo{person}{Hovav Shacham}.} \bibinfo{year}{2001}\natexlab{}.
\newblock \showarticletitle{Short signatures from the {Weil} pairing}. In \bibinfo{booktitle}{\emph{Advances in Cryptology - {CRYPTO} 2001}} \emph{(\bibinfo{series}{Lecture Notes in Computer Science}, Vol.~\bibinfo{volume}{2139})}. \bibinfo{publisher}{Springer}, \bibinfo{pages}{514--532}.
\newblock
\urldef\tempurl%
\url{https://doi.org/10.1007/3-540-44647-8_32}
\showDOI{\tempurl}


\bibitem[Castro and Liskov(2002)]%
        {castro2002practical}
\bibfield{author}{\bibinfo{person}{Miguel Castro} {and} \bibinfo{person}{Barbara Liskov}.} \bibinfo{year}{2002}\natexlab{}.
\newblock \showarticletitle{Practical Byzantine fault tolerance and proactive recovery}.
\newblock \bibinfo{journal}{\emph{ACM Transactions on Computer Systems (TOCS)}} \bibinfo{volume}{20}, \bibinfo{number}{4} (\bibinfo{year}{2002}), \bibinfo{pages}{398--461}.
\newblock


\bibitem[Chen and Micali(2019)]%
        {chen2019algorand}
\bibfield{author}{\bibinfo{person}{Jing Chen} {and} \bibinfo{person}{Silvio Micali}.} \bibinfo{year}{2019}\natexlab{}.
\newblock \showarticletitle{Algorand: A secure and efficient distributed ledger}.
\newblock \bibinfo{journal}{\emph{Theoretical Computer Science}}  \bibinfo{volume}{777} (\bibinfo{year}{2019}), \bibinfo{pages}{155--183}.
\newblock


\bibitem[Cohen et~al\mbox{.}(2022)]%
        {cohen2022aware}
\bibfield{author}{\bibinfo{person}{Shir Cohen}, \bibinfo{person}{Rati Gelashvili}, \bibinfo{person}{Lefteris~Kokoris Kogias}, \bibinfo{person}{Zekun Li}, \bibinfo{person}{Dahlia Malkhi}, \bibinfo{person}{Alberto Sonnino}, {and} \bibinfo{person}{Alexander Spiegelman}.} \bibinfo{year}{2022}\natexlab{}.
\newblock \showarticletitle{Be aware of your leaders}. In \bibinfo{booktitle}{\emph{Financial Cryptography and Data Security: 26th International Conference, FC 2022, Grenada, May 2--6, 2022, Revised Selected Papers}}. Springer, \bibinfo{pages}{279--295}.
\newblock


\bibitem[Danezis and Hrycyszyn(2018)]%
        {blockmania}
\bibfield{author}{\bibinfo{person}{George Danezis} {and} \bibinfo{person}{David Hrycyszyn}.} \bibinfo{year}{2018}\natexlab{}.
\newblock \showarticletitle{Blockmania: from block dags to consensus}.
\newblock \bibinfo{journal}{\emph{arXiv preprint arXiv:1809.01620}} (\bibinfo{year}{2018}).
\newblock


\bibitem[Danezis et~al\mbox{.}(2022)]%
        {narwhaltusk}
\bibfield{author}{\bibinfo{person}{George Danezis}, \bibinfo{person}{Lefteris Kokoris-Kogias}, \bibinfo{person}{Alberto Sonnino}, {and} \bibinfo{person}{Alexander Spiegelman}.} \bibinfo{year}{2022}\natexlab{}.
\newblock \showarticletitle{Narwhal and tusk: a dag-based mempool and efficient bft consensus}. In \bibinfo{booktitle}{\emph{Proceedings of the Seventeenth European Conference on Computer Systems}}. \bibinfo{pages}{34--50}.
\newblock


\bibitem[Dwork et~al\mbox{.}(1988)]%
        {dwork1988consensus}
\bibfield{author}{\bibinfo{person}{Cynthia Dwork}, \bibinfo{person}{Nancy Lynch}, {and} \bibinfo{person}{Larry Stockmeyer}.} \bibinfo{year}{1988}\natexlab{}.
\newblock \showarticletitle{Consensus in the presence of partial synchrony}.
\newblock \bibinfo{journal}{\emph{Journal of the ACM (JACM)}} \bibinfo{volume}{35}, \bibinfo{number}{2} (\bibinfo{year}{1988}), \bibinfo{pages}{288--323}.
\newblock


\bibitem[G{\k{a}}gol et~al\mbox{.}(2019)]%
        {aleph}
\bibfield{author}{\bibinfo{person}{Adam G{\k{a}}gol}, \bibinfo{person}{Damian Le{\'s}niak}, \bibinfo{person}{Damian Straszak}, {and} \bibinfo{person}{Micha{\l} {\'S}wi{\k{e}}tek}.} \bibinfo{year}{2019}\natexlab{}.
\newblock \showarticletitle{Aleph: Efficient Atomic Broadcast in Asynchronous Networks with Byzantine Nodes}. In \bibinfo{booktitle}{\emph{Proceedings of the 1st ACM Conference on Advances in Financial Technologies}}. \bibinfo{pages}{214--228}.
\newblock


\bibitem[Gelashvili et~al\mbox{.}(2022)]%
        {jolteon}
\bibfield{author}{\bibinfo{person}{Rati Gelashvili}, \bibinfo{person}{Lefteris Kokoris-Kogias}, \bibinfo{person}{Alberto Sonnino}, \bibinfo{person}{Alexander Spiegelman}, {and} \bibinfo{person}{Zhuolun Xiang}.} \bibinfo{year}{2022}\natexlab{}.
\newblock \showarticletitle{Jolteon and ditto: Network-adaptive efficient consensus with asynchronous fallback}. In \bibinfo{booktitle}{\emph{Financial Cryptography and Data Security: 26th International Conference, FC 2022, Grenada, May 2--6, 2022, Revised Selected Papers}}. Springer, \bibinfo{pages}{296--315}.
\newblock


\bibitem[Giridharan et~al\mbox{.}(2024)]%
        {giridharan2024autobahn}
\bibfield{author}{\bibinfo{person}{Neil Giridharan}, \bibinfo{person}{Florian Suri-Payer}, \bibinfo{person}{Ittai Abraham}, \bibinfo{person}{Lorenzo Alvisi}, {and} \bibinfo{person}{Natacha Crooks}.} \bibinfo{year}{2024}\natexlab{}.
\newblock \showarticletitle{Autobahn: Seamless high speed BFT}.
\newblock \bibinfo{journal}{\emph{Proceedings of the ACM SIGOPS 30th Symposium on Operating Systems Principles}} (\bibinfo{year}{2024}).
\newblock


\bibitem[Gueta et~al\mbox{.}(2019)]%
        {sbft}
\bibfield{author}{\bibinfo{person}{Guy~Golan Gueta}, \bibinfo{person}{Ittai Abraham}, \bibinfo{person}{Shelly Grossman}, \bibinfo{person}{Dahlia Malkhi}, \bibinfo{person}{Benny Pinkas}, \bibinfo{person}{Michael Reiter}, \bibinfo{person}{Dragos-Adrian Seredinschi}, \bibinfo{person}{Orr Tamir}, {and} \bibinfo{person}{Alin Tomescu}.} \bibinfo{year}{2019}\natexlab{}.
\newblock \showarticletitle{Sbft: a scalable and decentralized trust infrastructure}. In \bibinfo{booktitle}{\emph{2019 49th Annual IEEE/IFIP international conference on dependable systems and networks (DSN)}}. IEEE, \bibinfo{pages}{568--580}.
\newblock


\bibitem[Gupta et~al\mbox{.}(2021)]%
        {gupta2021rcc}
\bibfield{author}{\bibinfo{person}{Suyash Gupta}, \bibinfo{person}{Jelle Hellings}, {and} \bibinfo{person}{Mohammad Sadoghi}.} \bibinfo{year}{2021}\natexlab{}.
\newblock \showarticletitle{Rcc: Resilient concurrent consensus for high-throughput secure transaction processing}. In \bibinfo{booktitle}{\emph{2021 IEEE 37th International Conference on Data Engineering (ICDE)}}. IEEE, \bibinfo{pages}{1392--1403}.
\newblock


\bibitem[Keidar et~al\mbox{.}(2021)]%
        {allyouneed}
\bibfield{author}{\bibinfo{person}{Idit Keidar}, \bibinfo{person}{Eleftherios Kokoris{-}Kogias}, \bibinfo{person}{Oded Naor}, {and} \bibinfo{person}{Alexander Spiegelman}.} \bibinfo{year}{2021}\natexlab{}.
\newblock \showarticletitle{All You Need is {DAG}}. In \bibinfo{booktitle}{\emph{Proceedings of the 40th Symposium on Principles of Distributed Computing}} \emph{(\bibinfo{series}{PODC '21})}.
\newblock


\bibitem[Keidar et~al\mbox{.}(2022)]%
        {keidar2022cordial}
\bibfield{author}{\bibinfo{person}{Idit Keidar}, \bibinfo{person}{Oded Naor}, {and} \bibinfo{person}{Ehud Shapiro}.} \bibinfo{year}{2022}\natexlab{}.
\newblock \showarticletitle{Cordial Miners: A Family of Simple, Efficient and Self-Contained Consensus Protocols for Every Eventuality}.
\newblock \bibinfo{journal}{\emph{arXiv preprint arXiv:2205.09174}} (\bibinfo{year}{2022}).
\newblock


\bibitem[Kotla et~al\mbox{.}(2007)]%
        {zyzzyva}
\bibfield{author}{\bibinfo{person}{Ramakrishna Kotla}, \bibinfo{person}{Lorenzo Alvisi}, \bibinfo{person}{Mike Dahlin}, \bibinfo{person}{Allen Clement}, {and} \bibinfo{person}{Edmund Wong}.} \bibinfo{year}{2007}\natexlab{}.
\newblock \showarticletitle{Zyzzyva: speculative byzantine fault tolerance}. In \bibinfo{booktitle}{\emph{Proceedings of twenty-first ACM SIGOPS symposium on Operating systems principles}}. \bibinfo{pages}{45--58}.
\newblock


\bibitem[Malkhi et~al\mbox{.}(2023)]%
        {bbca}
\bibfield{author}{\bibinfo{person}{Dahlia Malkhi}, \bibinfo{person}{Chrysoula Stathakopoulou}, {and} \bibinfo{person}{Maofan Yin}.} \bibinfo{year}{2023}\natexlab{}.
\newblock \showarticletitle{BBCA-CHAIN: One-Message, Low Latency BFT Consensus on a DAG}.
\newblock \bibinfo{journal}{\emph{arXiv preprint arXiv:2310.06335}} (\bibinfo{year}{2023}).
\newblock


\bibitem[Martin and Alvisi(2006)]%
        {martin2006fast}
\bibfield{author}{\bibinfo{person}{J-P Martin} {and} \bibinfo{person}{Lorenzo Alvisi}.} \bibinfo{year}{2006}\natexlab{}.
\newblock \showarticletitle{Fast byzantine consensus}.
\newblock \bibinfo{journal}{\emph{IEEE Transactions on Dependable and Secure Computing}} \bibinfo{volume}{3}, \bibinfo{number}{3} (\bibinfo{year}{2006}), \bibinfo{pages}{202--215}.
\newblock


\bibitem[Rocket({[n.\,d.]})]%
        {snowflake}
\bibfield{author}{\bibinfo{person}{Team Rocket}.} \bibinfo{year}{[n.\,d.]}\natexlab{}.
\newblock \showarticletitle{Snowflake to avalanche: A novel metastable consensus protocol family for cryptocurrencies}.
\newblock  (\bibinfo{year}{[n.\,d.]}).
\newblock


\bibitem[Schett and Danezis(2021)]%
        {embedding}
\bibfield{author}{\bibinfo{person}{Maria~A Schett} {and} \bibinfo{person}{George Danezis}.} \bibinfo{year}{2021}\natexlab{}.
\newblock \showarticletitle{Embedding a deterministic BFT protocol in a block DAG}. In \bibinfo{booktitle}{\emph{Proceedings of the 2021 ACM Symposium on Principles of Distributed Computing}}. \bibinfo{pages}{177--186}.
\newblock


\bibitem[Shrestha et~al\mbox{.}(2024)]%
        {Sailfish}
\bibfield{author}{\bibinfo{person}{Nibesh Shrestha}, \bibinfo{person}{Aniket Kate}, {and} \bibinfo{person}{Kartik Nayak}.} \bibinfo{year}{2024}\natexlab{}.
\newblock \bibinfo{title}{{Sailfish: Towards Improving Latency of DAG-based BFT}}.
\newblock \bibinfo{howpublished}{Cryptology ePrint Archive, Paper 2024/472}.
\newblock
\newblock
\shownote{\url{https://eprint.iacr.org/2024/472}}.


\bibitem[Spiegelman et~al\mbox{.}(2024)]%
        {spiegelman2023shoal}
\bibfield{author}{\bibinfo{person}{Alexander Spiegelman}, \bibinfo{person}{Balaji Arun}, \bibinfo{person}{Rati Gelashvili}, {and} \bibinfo{person}{Zekun Li}.} \bibinfo{year}{2024}\natexlab{}.
\newblock \showarticletitle{{Shoal: Improving DAG-BFT latency and robustness}}. In \bibinfo{booktitle}{\emph{International Conference on Financial Cryptography and Data Security}}, \bibfield{editor}{\bibinfo{person}{Jeremy Clark} {and} \bibinfo{person}{Elaine Shi}} (Eds.). Springer.
\newblock


\bibitem[Spiegelman et~al\mbox{.}(2022a)]%
        {bullshark}
\bibfield{author}{\bibinfo{person}{Alexander Spiegelman}, \bibinfo{person}{Neil Giridharan}, \bibinfo{person}{Alberto Sonnino}, {and} \bibinfo{person}{Lefteris Kokoris-Kogias}.} \bibinfo{year}{2022}\natexlab{a}.
\newblock \showarticletitle{Bullshark: Dag bft protocols made practical}. In \bibinfo{booktitle}{\emph{Proceedings of the 2022 ACM SIGSAC Conference on Computer and Communications Security}}. \bibinfo{pages}{2705--2718}.
\newblock


\bibitem[Spiegelman et~al\mbox{.}(2022b)]%
        {bullsharksync}
\bibfield{author}{\bibinfo{person}{Alexander Spiegelman}, \bibinfo{person}{Neil Giridharan}, \bibinfo{person}{Alberto Sonnino}, {and} \bibinfo{person}{Lefteris Kokoris-Kogias}.} \bibinfo{year}{2022}\natexlab{b}.
\newblock \showarticletitle{Bullshark: The Partially Synchronous Version}.
\newblock \bibinfo{journal}{\emph{arXiv preprint arXiv:2209.05633}} (\bibinfo{year}{2022}).
\newblock


\bibitem[Stathakopoulou et~al\mbox{.}(2019)]%
        {MirBFT}
\bibfield{author}{\bibinfo{person}{Chrysoula Stathakopoulou}, \bibinfo{person}{Tudor David}, {and} \bibinfo{person}{Marko Vukolic}.} \bibinfo{year}{2019}\natexlab{}.
\newblock \showarticletitle{Mir-BFT: High-Throughput {BFT} for Blockchains}.
\newblock \bibinfo{journal}{\emph{CoRR}}  \bibinfo{volume}{abs/1906.05552} (\bibinfo{year}{2019}).
\newblock
\showeprint[arxiv]{1906.05552}
\urldef\tempurl%
\url{http://arxiv.org/abs/1906.05552}
\showURL{%
\tempurl}


\bibitem[Stathakopoulou et~al\mbox{.}(2022)]%
        {stathakopoulou2022state}
\bibfield{author}{\bibinfo{person}{Chrysoula Stathakopoulou}, \bibinfo{person}{Matej Pavlovic}, {and} \bibinfo{person}{Marko Vukoli{\'c}}.} \bibinfo{year}{2022}\natexlab{}.
\newblock \showarticletitle{State machine replication scalability made simple}. In \bibinfo{booktitle}{\emph{Proceedings of the Seventeenth European Conference on Computer Systems}}. \bibinfo{pages}{17--33}.
\newblock


\bibitem[Suri-Payer et~al\mbox{.}(2021)]%
        {suri2021basil}
\bibfield{author}{\bibinfo{person}{Florian Suri-Payer}, \bibinfo{person}{Matthew Burke}, \bibinfo{person}{Zheng Wang}, \bibinfo{person}{Yunhao Zhang}, \bibinfo{person}{Lorenzo Alvisi}, {and} \bibinfo{person}{Natacha Crooks}.} \bibinfo{year}{2021}\natexlab{}.
\newblock \showarticletitle{Basil: Breaking up BFT with ACID (transactions)}. In \bibinfo{booktitle}{\emph{Proceedings of the ACM SIGOPS 28th Symposium on Operating Systems Principles}}. \bibinfo{pages}{1--17}.
\newblock


\bibitem[Team(2019)]%
        {algorand}
\bibfield{author}{\bibinfo{person}{The~Algorand Team}.} \bibinfo{year}{2019}\natexlab{}.
\newblock \bibinfo{title}{Algorand Networks}.
\newblock
\newblock
\urldef\tempurl%
\url{https://www.algorand.foundation/}
\showURL{%
\tempurl}
\newblock
\shownote{Accessed: April 2024}.


\bibitem[Team(2020)]%
        {ava}
\bibfield{author}{\bibinfo{person}{The~Avalanche Team}.} \bibinfo{year}{2020}\natexlab{}.
\newblock \bibinfo{title}{Avalanche Networks}.
\newblock
\newblock
\urldef\tempurl%
\url{https://www.avax.network/}
\showURL{%
\tempurl}
\newblock
\shownote{Accessed: May 2024}.


\bibitem[Team(2022)]%
        {aptos}
\bibfield{author}{\bibinfo{person}{The~Aptos Team}.} \bibinfo{year}{2022}\natexlab{}.
\newblock \bibinfo{title}{Aptos Networks}.
\newblock
\newblock
\urldef\tempurl%
\url{https://aptosfoundation.org/}
\showURL{%
\tempurl}
\newblock
\shownote{Accessed: May 2024}.


\bibitem[Team(2021)]%
        {diembft}
\bibfield{author}{\bibinfo{person}{The~Diem Team}.} \bibinfo{year}{2021}\natexlab{}.
\newblock \bibinfo{title}{DiemBFTv4}.
\newblock
\newblock
\urldef\tempurl%
\url{https://developers.diem.com/papers/diem-consensus-state-machine-replication-in-the-diem-blockchain/2021-08-17.pdf}
\showURL{%
\tempurl}
\newblock
\shownote{Accessed: May 2023}.


\bibitem[Team(2023)]%
        {sui}
\bibfield{author}{\bibinfo{person}{The~Sui Team}.} \bibinfo{year}{2023}\natexlab{}.
\newblock \bibinfo{title}{Sui Networks}.
\newblock
\newblock
\urldef\tempurl%
\url{https://sui.io/}
\showURL{%
\tempurl}
\newblock
\shownote{Accessed: May 2023}.


\bibitem[Tsimos et~al\mbox{.}(2023)]%
        {hammerhead}
\bibfield{author}{\bibinfo{person}{Giorgos Tsimos}, \bibinfo{person}{Anastasios Kichidis}, \bibinfo{person}{Alberto Sonnino}, {and} \bibinfo{person}{Lefteris Kokoris-Kogias}.} \bibinfo{year}{2023}\natexlab{}.
\newblock \showarticletitle{Hammerhead: Leader reputation for dynamic scheduling}.
\newblock \bibinfo{journal}{\emph{arXiv preprint arXiv:2309.12713}} (\bibinfo{year}{2023}).
\newblock


\bibitem[Yang et~al\mbox{.}(2022)]%
        {dispersedledger}
\bibfield{author}{\bibinfo{person}{Lei Yang}, \bibinfo{person}{Seo~Jin Park}, \bibinfo{person}{Mohammad Alizadeh}, \bibinfo{person}{Sreeram Kannan}, {and} \bibinfo{person}{David Tse}.} \bibinfo{year}{2022}\natexlab{}.
\newblock \showarticletitle{$\{$DispersedLedger$\}$:$\{$High-Throughput$\}$ Byzantine Consensus on Variable Bandwidth Networks}. In \bibinfo{booktitle}{\emph{19th USENIX Symposium on Networked Systems Design and Implementation (NSDI 22)}}.
\newblock


\bibitem[Yin et~al\mbox{.}(2019)]%
        {hotstuff}
\bibfield{author}{\bibinfo{person}{Maofan Yin}, \bibinfo{person}{Dahlia Malkhi}, \bibinfo{person}{Michael~K. Reiter}, \bibinfo{person}{Guy Golan{-}Gueta}, {and} \bibinfo{person}{Ittai Abraham}.} \bibinfo{year}{2019}\natexlab{}.
\newblock \showarticletitle{HotStuff: {BFT} Consensus with Linearity and Responsiveness}. In \bibinfo{booktitle}{\emph{Proceedings of the 2019 {ACM} Symposium on Principles of Distributed Computing, {PODC} 2019, Toronto, ON, Canada, July 29 - August 2, 2019}}, \bibfield{editor}{\bibinfo{person}{Peter Robinson} {and} \bibinfo{person}{Faith Ellen}} (Eds.). \bibinfo{publisher}{{ACM}}, \bibinfo{pages}{347--356}.
\newblock


\end{thebibliography}

\end{document}